\tikzstyle{tableEnc} = [minimum size=3mm, thick, node
\tikzstyle{tableEl} = [minimum size=3mm, none, node
\theoremstyle{plain}
\newtheorem{thm}{Theorem}
\newtheorem{lem}[thm]{Lemma}
\newtheorem{cor}[thm]{Corollary}
\theoremstyle{definition}
\newtheorem*{problem}{Problem statement}
\begin{document}  

%% Gaph theory
\tikzstyle{genGraphNode} = [minimum size=3mm, thick, node
distance=5mm] 
\tikzset{%Define standard arrow tip
    >=stealth', bend angle=10 }
\tikzstyle{normalNode} = [genGraphNode, circle,draw=black,fill=black!10,thick]
\tikzstyle{controllingNode} = [genGraphNode,
circle,draw=black!20,fill=black!60,very thick, text=white, font=\bfseries] 
\tikzstyle{observingNode} =[genGraphNode, circle,draw=black!50,fill=white,thick]
\tikzstyle{placeholder}=[genGraphNode, circle]

%% Block diagrams in Simulink
\tikzstyle{block} = [draw, fill=white!20, rectangle, 
    minimum height=2em, minimum width=3em] 
\tikzstyle{gain} = [draw,shape=circle, minimum height=2em, minimum
width=2em, inner sep=1pt] 
\tikzstyle{gainLeft} = [gain]
\tikzstyle{gainRight} = [gain, shape border rotate=180]
\tikzstyle{gainDown} = [gain, shape border rotate=-90]
\tikzstyle{gainUp} = [gain, shape border rotate=90]
\tikzstyle{sum} = [draw, fill=white!20, circle,inner sep=1pt]
\tikzstyle{input} = [coordinate]
\tikzstyle{output} = [coordinate]
\tikzstyle{pinstyle} = [pin edge={to-,thin,black}] 
 
%% here definitions for notation are given

%formation variables 
\newcommand{\numVeh}{N} %number of vehicles
%variables of the vehicle
\newcommand{\pos}{y}
\newcommand{\transPos}{\bar{\pos}}
\newcommand{\err}{\tilde{e}}
\newcommand{\totalInp}{u}
\newcommand{\inp}{r}
\newcommand{\transInp}{\bar{\inp}}
\newcommand{\rearWeight}{\epsilon}
\newcommand{\inpGen}{w}
\newcommand{\outGen}{z}
\newcommand{\disturbance}{d}
\newcommand{\outDist}{{\disturbance_{\text{out}}}}
\newcommand{\inDist}{{\disturbance_{\text{in}}}}
\newcommand{\dreference}{d_{\text{ref}}}

%transfer functions
\newcommand{\contTf}{R}
\newcommand{\vehTf}{G}
\newcommand{\openLoop}{M}
\newcommand{\openLoopPart}{\openLoop_s}
\newcommand{\vehNum}{b}
\newcommand{\vehDen}{a}
\newcommand{\contNum}{q}
\newcommand{\contDen}{p}
\newcommand{\olnum}{\phi}
\newcommand{\olden}{\psi} 
\newcommand{\inputCoef}{\rho}
\newcommand{\diagBlock}{T}
\newcommand{\tranFun}{T}
\newcommand{\tranFunFromTo}[2]{T_{#1 #2}}
\newcommand{\tranFunCo}{{\tranFunFromTo{\contNode}{\obsvNode}}}
\newcommand{\tranFunZ}{Z}
\newcommand{\numTf}{n}
\newcommand{\denTf}{d}
\newcommand{\numPol}{h}
\newcommand{\denPol}{g}
\newcommand{\prodPartNum}{\tau}
\newcommand{\oldenord}{\mu}
\newcommand{\olnumord}{\eta}
\newcommand{\relOrd}{\chi}
\newcommand{\orderQ}{N_n} %the order of polynomial Q - the number of zeros in
% single integrator TF
\newcommand{\tranFunZeroPol}{\numTf_\Delta}%polynomial of numerator of T after
% factoriing out zeros of open loop
\newcommand{\tranFunDistOut}[2]{\tranFun_{\text{out},#1 #2}}
\newcommand{\tranFunDistOutCo}{\tranFunDistOut{\contNode}{\obsvNode}}
\newcommand{\tranFunDistIn}[2]{\tranFun_{\text{in},#1 #2}}
\newcommand{\tranFunFeedbackPart}{S}
\newcommand{\tranFunFeedbackPartFromTo}[2]{{\tranFunFeedbackPart}_{#1 #2}}
\newcommand{\tranFunFeedbackPartCo}{\tranFunFeedbackPartFromTo{\contNode}{\obsvNode}}
\newcommand{\tranFunGenInpState}[2]{\tranFun_{\text{\inpGen\outGen}, #1 #2}}

%%State space description
\newcommand{\stateVar}{x}
\newcommand{\stateVarDiag}{\hat{\stateVar}}
\newcommand{\matA}{A}
\newcommand{\matB}{B}
\newcommand{\matC}{C}
\newcommand{\matD}{D}
\newcommand{\contMat}{\mathcal{C}}
\newcommand{\obsvMat}{\mathcal{O}}
\newcommand{\forestMat}{Q}
%Laplacian matrix
\newcommand{\lapl}{L} %Laplacian matrix
\newcommand{\spatEig}{\lambda} % eigenvalues of Laplacian matrix
\newcommand{\spatEigZ}{\gamma} %zeros of polynomial P_\Delta in numerator of
% T_c,o
\newcommand{\redLapl}{\bar{\lapl}}
\newcommand{\charPolLapl}{p_\lapl}
\newcommand{\charPolLaplCoef}{g}
\newcommand{\charPolNum}{\numPol} % the polynomila in the numerator

\newcommand{\charPolNumCoefMod}{\mu} % coeffictients of the polynomial
\newcommand{\charPolNumCoefSingle}{h}
\newcommand{\charPolNumCoef}{\bar{\charPolNumCoefSingle}} % coeffictients of the
\newcommand{\prodPartNumCoef}[2]{\bar{\tau}^{#1}_{#2}}
% polynomial \charPolNum
% \charPolNum
\newcommand{\eigVect}{v}
\newcommand{\matEigVect}{V}
\newcommand{\matJ}{\Lambda}
%matrices and vectors
\newcommand{\canonVect}{e}
\newcommand{\idMat}{I}
\newcommand{\oneVect}{\mathbf{1}}

%Graph theory
\newcommand{\graph}{\mathcal{G}}
\newcommand{\vertexSet}{\mathcal{V}}
\newcommand{\edgeSet}{\mathcal{E}}
\newcommand{\edge}{\epsilon}
\newcommand{\vertex}{\nu}
\newcommand{\adjMat}{A}
\newcommand{\degreeMat}{D}
\newcommand{\neighborSet}{\mathcal{N}}
\newcommand{\path}[2]{\pi_{#1 #2}}%a directed path from #1  to #2 
\newcommand{\pathLength}{{l}}
\newcommand{\pathWeight}{\vartheta}
\newcommand{\pathWeightCo}{\pathWeight_{\contNode \obsvNode}}
\newcommand{\distance}{{\delta}}
\newcommand{\distanceCO}{\distance_{\contNode \obsvNode}}
\newcommand{\tree}{\mathcal{T}}
\newcommand{\forest}{\bar{F}}
\newcommand{\forestSet}{\mathcal{F}}

%formation interconnection
\newcommand{\obsvNode}{o}
\newcommand{\contNode}{c}
\newcommand{\contNodeSet}{\mathcal{S}_\contNode}
\newcommand{\contNodeNum}{N_c}             

\title{Transfer functions in consensus systems with
higher-order dynamics and external inputs}   
% Title, preferably not more than 10 words. 

 \author{Ivo~Herman, Dan~Martinec and Michael Sebek
        % a space
\thanks{All authors are with the Faculty of Electrical Engineering, Czech
Technical University in Prague, Department of Control Engineering, Karlovo
namesti 13, 121 35 Prague, Czech Republic. 
E-mail address:
\{ivo.herman, martinec.dan, sebekm1\}@fel.cvut.cz }% <-this
% % stops a space
\thanks{The research
was supported by the Czech Science Foundation within the project GACR
13-06894S (I.~H.). 
}
\thanks{Manuscript received January 23, 2015, revised August, 2015}} 
                                          
\markboth{IEEE Transactions on Control of Network Systems}%
{Herman \MakeLowercase{\textit{et al.}}: Transfer functions in consensus
systems}

\maketitle

\begin{abstract}                          % Abstract of not more than 250 words.
This paper considers transfer functions
in consensus systems where agents have identical SISO dynamics
of arbitrary order. The interconnecting structure is a directed
graph.
The transfer functions for various inputs and outputs are
presented in simple product forms with a similar structure of the numerator and
the denominator. This structure combines the network properties and the agent model in an explicit way. 
The link between a higher-order and a single-integrator
dynamics is shown and the polynomials of the transfer function in the
single-integrator system are related to the graph properties. These properties also allow to generalize a
result on the minimal
dimension of the controllable subspace to the directed graphs.
%The results can then be used for instance for analysis of scaling in networks.
\end{abstract} 

\section{Introduction} 
Distributed control has become a very intensive field of research. Numerous
results for control of highway platoons, robot formations or synchronization of
oscillators were published. To the standard consensus problem
an exogenous input can be added, which could capture for instance the effect of
a measurement noise, input or output disturbances \cite{Lin2012a, Fitch2013,
Bamieh2012} or reference values \cite{Herman2013b}.

Much effort has been invested in understanding the behavior of single integrator
systems, especially of the consensus. The results reveal the effect the network
structure has on the overall behavior.  For example, the convergence time is
related to the second smallest eigenvalue
of the Laplacian matrix \cite{Olfati-Saber2007}.
The effect of the network structure on the $\mathcal{H}_2$ and
$\mathcal{H}_\infty$ norms was investigated in \cite{Zelazo2011}.

In many systems (e.g., highway platoons or oscillators), the agent models are
more complicated higher-order systems.  In this case stability is a crucial issue.
 It was shown in the paper \cite{Fax2004a} that the overall
formation of identical agents is stable if and only if it is stable for all
eigenvalues of a Laplacian matrix. One of the approaches for stabilization is
 based on changing the
gains when the graph topology changes \cite{Fradkov2011, Zhang2011}. Also the
concept of passivity guarantees stability (see \cite{Arcak2007, Chopra2006}).

Nevertheless, stability is not sufficient for good transients.
Phenomena not seen in the single integrator dynamics can appear with
higher-order dynamics. Well known is a so called string
stability, which concerns amplification of the disturbance
in a vehicular formation.
Some of the works in this field using the properties of the transfer functions
are \cite{Seiler2004a, Herman2013b, Middleton2010}. The effect of noise on the
rigidity of the formation, known as coherence, was studied in the paper
\cite{Bamieh2012}.

When the effects of inputs are considered, a controllability becomes an issue.
The results on controllability inferred from graph structure are shown in
\cite{Egerstedt2012}. The bounds on the dimension of the
controllable subspace for undirected graphs are provided in \cite{Zhang2014}.

Dynamic behavior and frequency response of a linear system are given by the
poles and zeros of the transfer function from the input to the output. Their
location also determines the response to some reference signal.
The structure of poles of the transfer functions was investigated in
\cite{Wu1995} or \cite{Olfati-Saber2007}. While the location of poles of a
network systems is now well understood, less attention was paid to the
location of zeros.

One of the first papers considering the location of zeros in the consensus based
algorithms was \cite{Briegel2011}, considering a single integrator model of one
agent and a symmetric communication structure. The paper
\cite{Torres2013} extends the results of \cite{Briegel2011} to the directed
graphs and also shows relations of the zeros to the Laplacian matrix.
Transfer functions and their margins in cyclic formations are discussed in
\cite{Yoon2011}. The paper \cite{Torres2014} shows that even a formation with
stable poles can have zeros in the right half-plane. 
 %This might happen when there are strong and weak paths.

In this paper we study transfer functions in a network system where one
agent with a known input acts as a controlling node
and some other agent, output of which is of interest, serves as an observing
node.
All the agents are modelled by SISO systems and they are interconnected over
directed graphs using relative output feedback. We generalize the results of
\cite{Briegel2011} and \cite{Torres2013} to higher-order dynamics and directed graphs. The key results
are: 
\begin{enumerate}
  \item A product form of the transfer function (Theorem
\ref{lem:totTranFun}) showing a similar structure of the poles and zeros. Such
a product form expresses the transfer function as a series connection of
systems with identical structure. The transfer function with
a general input and output consists of two parts: a network part and an
open-loop part (Theorem \ref{thm:genInputState}).
\item A graph theoretical representation of the polynomials in single-integrator
system (Lemma \ref{lem:coefForestWeights}) and the relation of the zeros to
the Laplacian (Theorem \ref{conj:onePath}). If there is only one path between
the controlling and the observing node, then the zeros are obtained from the
Laplacian matrix.
\item The minimal dimension of the controllable subspace is related to the
maximal distance from the controlling node (Theorem \ref{thm:minContSubsp}). 
\end{enumerate}

Since we work with an arbitrary LTI model, the results here do not tell us much
about particular transient properties. The results should rather serve as tools
for analysis in performance assessment in particular graph types. For instance,
the product form of the transfer function allowed us easier analysis of a
scaling of the $\mathcal{H}_\infty$ norm in vehicular platoons 
\cite{Herman2013b}.

 This paper extends our preliminary results in \cite{Herman2014a}.
We add graph theoretic representations of all polynomials and different types of
inputs and outputs are considered. Also a result on the minimal dimension of the
controllable subspace is added.

\subsubsection*{Notation} We
denote matrices with capital letters and a particular element in a matrix $A$ is
denoted as $a_{ij}$. All vectors are column vectors and are denoted with
lowercase letters, the $i$th element of a vector $v$ is $v_i$. Scalars are
denoted by Greek letters. $\idMat$ is an identity matrix and a canonical basis
vector is $\canonVect_i=[0, \ldots,1, \ldots,0]^T$ with 1 on the $i$th position. The symbol $s$ used
in transfer functions denotes the Laplace variable. The polynomials are denoted
by lowercase letters and $g_i$ is the coefficient at $s^i$ in the polynomial
$g(s)$ (the argument $s$ is usually used with polynomials).

\section{Graph theory}  
The network system interconnection (sharing of information) can be viewed as
a~\emph{directed graph}. The graph $\graph$ has a~vertex set
$\vertexSet(\graph)$ and an arc set $\edgeSet(\graph)$. The arc
$\edge(\vertex_j, \vertex_i)$ is oriented, which means that the $i$th agent
receives its information from the $j$th agent. A directed path $\path{i}{j}$ from $i$ to $j$ of length $\pathLength(\path{i}{j})$ is a sequence of vertices and arcs $\vertex_1,
\edge_1, \vertex_2, \edge_2, \ldots \vertex_{\pathLength+1}$, where each vertex
and arc can be used only once. The length (number of arcs) of the shortest path
between $i$ and $j$ is called the distance $\distance_{ij}$ of vertices. 
A cycle is a path with the first and last vertices identical.

An adjacency matrix is defined as $\adjMat=[a_{ij}]$.
Its entries $a_{ij}$ are either zero if there is no arc from $\vertex_j$ to
$\vertex_i$ or a positive number called weight if the arc is present. We also
define the weight of the path as
$\pathWeight(\path{i}{j})=\prod_{{\edge(k, m) \in
\path{i}{j}}} a_{k m}$.
It is the product of weights of all arcs in the path. Similarly, we define the weight of a~subset $\graph^{'}$ of a~graph $\graph$ as 
\begin{equation}
	\pathWeight(\graph^{'})=\;\prod_{\mathclap{\edge(k,m)\in \edgeSet(\graph^{'})}}
	\; a_{k m}.
\end{equation}

\begin{figure}
\centering
	\centering
\begin{subfigure}{0.14\textwidth}
\centering
	\begin{tikzpicture}
	\node[controllingNode] (n1) {1};
	\node[normalNode] (n2) [below=of n1]{2}
		edge [<-, bend left] (n1);
	\node[observingNode] (n3) [below=of n2]{3}
		edge [<-, bend left] (n2)
		edge [->, bend right] (n2);
	\node[normalNode] (n5) [right=of n2]{5}
		edge [->, bend right] (n2);
	\node[normalNode] (n4) [above=of n5]{4}
		edge [->, bend right] (n5);
	\node[normalNode] (n6) [below=of n5]{6}
		edge [<-, bend left] (n3);	
\end{tikzpicture}
	\caption{Orig. graph}
\end{subfigure} 
\begin{subfigure}{0.14\textwidth}
\centering
	\begin{tikzpicture}
	\node[controllingNode] (n1) {1};
	\node[normalNode] (n2) [below=of n1]{2}
		edge [<-, bend left] node[left] {$0.6$} (n1);
	\node[observingNode] (n3) [below=of n2]{3}
		edge [<-, bend left] node[left] {$0.4$} (n2);
	\node[normalNode] (n5) [right=of n2]{5};
	\node[normalNode] (n4) [above=of n5]{4};
		%edge [->, bend right] node[left] {$0.8$} (n5);
	\node[normalNode] (n6) [below=of n5]{6}
		edge [<-, bend left] node[above] {$1.5$} (n3);	
\end{tikzpicture}
	\caption{First forest}
\end{subfigure}
\begin{subfigure}{0.14\textwidth}
\centering
	\begin{tikzpicture}
	\node[controllingNode] (n1) {1};
	\node[normalNode] (n2) [below=of n1]{2}
		edge [<-, bend left] node[left] {$0.6$} (n1);
	\node[observingNode] (n3) [below=of n2]{3}
		edge [<-, bend left] node[left] {$0.4$} (n2);
	\node[normalNode] (n5) [right=of n2]{5};
	\node[normalNode] (n4) [above=of n5]{4}
		edge [->, bend right] node[left] {$0.8$} (n5);
	\node[normalNode] (n6) [below=of n5]{6};
		%edge [<-, bend left] node[below] {$1.5$} (n3);	
\end{tikzpicture}
	\caption{Second forest}
\end{subfigure}
 \caption{Example of the set $\forestSet_3^{1\rightarrow3}$ of all spanning
 forests with three arcs with a tree diverging from the node $1$ and containing
 $3$.
 The weights of the two spanning forests are: (a)
 $\pathWeight(\forest_{3}^{1\rightarrow 3})_1=0.6 \cdot 0.4 \cdot 1.5=0.36$ and
 (b) $\pathWeight(\forest_{3}^{1\rightarrow 3})_2=0.6 \cdot 0.4 \cdot
 0.8=0.192$. The weight of the set is $\pathWeight(\forestSet_{3}^{1\rightarrow 3})=0.192+0.36=0.552$. }
	\label{fig:forestExamples}  
\end{figure}
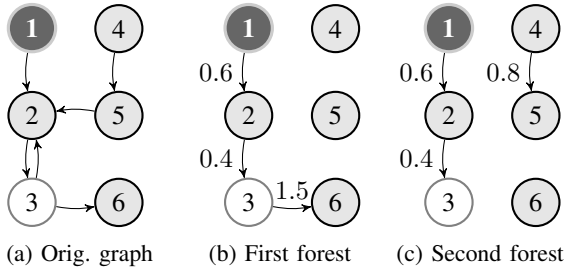

A directed tree is a subset of a graph without directed cycles. A~diverging
directed tree always has a path from one particular node called the
root to each node in the tree.
There is no directed path from the nodes in the diverging tree to the root and
all the nodes except for the root have in-degree one.
A forest $\forest$ is a set of mutually disjoint trees.
A spanning forest is a forest on all vertices of the graph (see
\cite{Chebotarev2014} for an overview of directed trees). A diverging
forest (out-forest) is a forest of diverging trees. 
Following the notation of
\cite{Chebotarev2002} we denote $\forestSet_k^{i\rightarrow j}$ the \emph{set of
all spanning diverging forests} with $k$ arcs.
Such a set must contain a tree with the root $i$ which contains the node $j$.
The weight of this set is
\begin{equation}
	\pathWeight(\forestSet_k^{i\rightarrow j}) =\:\: \sum_{\mathclap{\forest_k^{i
	\rightarrow j} \in \forestSet_k^{i\rightarrow j} }} \:\:
	\pathWeight(\forest_k^{i\rightarrow j}),
\end{equation}
with the sum taken over all spanning forests $\forest_k^{i\rightarrow
j}$ in the set ${\forestSet}_k^{i\rightarrow
j}$. This is illustrated in Fig. \ref{fig:forestExamples}.

Let $Q_{k}$ be a matrix of spanning out-forests of $\graph$ which have $k$
	arcs. The $(i,j)$th element $(q^{k})_{ij}$ of $Q_{k}$ is given as
	\begin{equation}
		(q^{k})_{ij} = \pathWeight(\forestSet_{k}^{j \rightarrow i}).
		\label{eq:weightOfForestCofactor}
	\end{equation}
	It is the weight of the set of all spanning out-forests
	$\forestSet_{k}^{j \rightarrow i}$ with $k$ arcs containing $i$
	and diverging from the root $j$.

 Let us denote $\degreeMat=\text{diag}\big(\text{deg}(\vertex_i)\big)$ the
diagonal matrix of the sums of weights of the arcs incident to the vertex $i$.
Then the Laplacian matrix $\lapl \in \mathbb{R}^{\numVeh \times \numVeh}$ of a
directed graph is defined as
\begin{equation}
	\lapl = \degreeMat-\adjMat. \label{eq:laplacianGen}
\end{equation}

We denote the eigenvalues of the Laplacian as $\spatEig_i, \, i=1, \ldots, N$.
All the eigenvalues have positive real part and there is always a zero
eigenvalue of the Laplacian, i.e., $\spatEig_1 = 0$ with the corresponding
eigenvector $\oneVect$ of all ones, i.e., $\lapl
  \oneVect = 0$.

In the paper we will use a version of Lemma 3.1 in \cite{Briegel2011}. Here
we provide a different proof, as the original proof is valid only for commuting
 matrices and  unweighted
 graphs.
\begin{lem} 
For the elements of the powers of Laplacian holds
	\begin{equation}
		(-\lapl^m)_{ij} = \left\{  \begin{matrix}
			0, & \text{ for } m < \distance_{ji} \\
			\pathWeight(\forestSet_{k}^{j \rightarrow i}), & \text{ for } m =
			\distance_{ji}
		\end{matrix}
		\right.,
	\end{equation} 
	\label{lem:numwalks}
\end{lem}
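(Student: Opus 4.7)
My plan is to prove both parts of the lemma by induction on $m$, using the recursion $(-\lapl)^{m+1}_{ij} = \sum_\ell (-\lapl)^m_{i\ell}\,(-\lapl)_{\ell j}$ together with the triangle inequality $\distance_{j\ell} + \distance_{\ell i} \geq \distance_{ji}$ for directed distances. The key observation is that $(-\lapl)_{\ell j}$ is nonzero only when $\ell = j$ (contributing $-d_{jj}$) or when $j \to \ell$ is an arc (contributing $a_{\ell j}$), so in every surviving summand one has $\distance_{j\ell} \leq 1$.

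The base cases $m=0$ and $m=1$ are direct: $(-\lapl)^0 = \idMat$ handles the identity on the unique zero-arc spanning forest of singletons, and $(-\lapl)_{ij} = a_{ij}$ gives the weight of a single-arc diverging forest exactly when such an arc exists, and vanishes otherwise.

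For the inductive step with $m+1 \leq \distance_{ji}$, the term $\ell = j$ vanishes since $(-\lapl)^m_{ij} = 0$ by induction (as $m < \distance_{ji}$). For off-diagonal $\ell$ with arc $j \to \ell$, the triangle inequality yields $\distance_{\ell i} \geq \distance_{ji} - 1$. When $m + 1 < \distance_{ji}$, this gives $\distance_{\ell i} > m$ for every surviving $\ell$, and induction kills each term, proving the vanishing claim. When $m+1 = \distance_{ji}$, only the $\ell$ with $j \to \ell$ an arc and $\distance_{\ell i} = m$ survive, each contributing $a_{\ell j}\,\pathWeight(\forestSet_m^{\ell \to i})$ by induction.

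The proof concludes with the combinatorial identity $\sum_\ell a_{\ell j}\,\pathWeight(\forestSet_m^{\ell \to i}) = \pathWeight(\forestSet_{m+1}^{j \to i})$, which I would establish by a weight-preserving bijection. Any forest in $\forestSet_m^{\ell \to i}$ must have its $m$-arc component equal to a shortest $\ell$-to-$i$ path, since a diverging tree reaching $i$ from $\ell$ with exactly $\distance_{\ell i}$ arcs is itself a geodesic, with all remaining vertices being isolated singletons; in particular $j$ is a singleton, because otherwise $j$ would lie on that geodesic and then $\distance_{\ell j} + \distance_{ji} \leq m$, contradicting $\distance_{ji} = m+1$. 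Prepending the arc $j \to \ell$ therefore produces a unique element of $\forestSet_{m+1}^{j \to i}$, and the inverse map strips the first arc from the shortest $j$-to-$i$ path. The main obstacle is keeping this combinatorial step rigorous — specifically, using tightness of the distance bounds $m = \distance_{\ell i}$ and $m+1 = \distance_{ji}$ to pin down the tree as a genuine shortest path, so that the mapping and its inverse are unambiguous and the weights multiply correctly under the bijection.
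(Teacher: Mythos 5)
Your proof is correct, but it takes a genuinely different route from the paper. The paper's proof is a two-line deduction from a cited result of Chebotarev and Agaev, namely the expansion $(-\lapl)^m=\sum_{k=0}^{m}\alpha_k Q_{m-k}$ of Laplacian powers in the basis of forest matrices: since every forest in $\forestSet_{m-k}^{j\rightarrow i}$ needs at least $\distance_{ji}$ arcs, all terms with $m<\distance_{ji}$ have zero $(i,j)$ entry, and for $m=\distance_{ji}$ only the $k=0$ term survives. You instead give a self-contained induction on $m$ via $(-\lapl)^{m+1}_{ij}=\sum_\ell(-\lapl)^m_{i\ell}(-\lapl)_{\ell j}$, the directed triangle inequality, and an explicit weight-preserving bijection between $\bigcup_\ell\{j\to\ell\}\times\forestSet_m^{\ell\rightarrow i}$ (over out-neighbours $\ell$ of $j$ at distance $m$ from $i$... to $i$) and $\forestSet_{m+1}^{j\rightarrow i}$; your observation that tightness of the arc budget forces the relevant tree to be a geodesic and all other vertices to be singletons is exactly the right mechanism, and it also makes transparent why the $m=\distance_{ji}$ entry equals the total weight of shortest paths. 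What your approach buys is independence from the external reference (and in particular you never need the unstated fact that the leading coefficient $\alpha_0$ in the cited expansion equals $1$, on which the paper's $m=\distance_{ji}$ case implicitly relies); what the paper's approach buys is brevity and a statement that sits naturally inside the forest-matrix machinery it reuses later (equations for $\text{adj}(s\idMat+\lapl)$). No gaps — just make sure to state explicitly that $\distance_{\ell i}$ may be infinite, in which case the inductive vanishing claim still applies, so disconnected vertex pairs cause no trouble.
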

\begin{proof}
	We will use the result \cite[Proposition 8]{Chebotarev2002}, which shows
	\begin{equation}
		(-\lapl)^m = \sum_{k=0}^m \alpha_k Q_{m-k},
		\label{eq:powersOfLaplacian}
	\end{equation}
	with $\alpha_k \in \mathbb{R}$
	being a constant.
	Since $(q^{m-k})_{ij}$ is the weight of $\forestSet_{m-k}^{j
	\rightarrow i}$, the minimal number of arcs for any forest in the set to exist
	is the distance $\distance_{ji}$ from the node $i$ to the node $i$. Hence, for
	$m<\distance_{ji}$, $(i,j)$th element of all $Q_{m-k}$ is zero and therefore $(-\lapl^{m})_{ij}$
	is also zero. For $m=\distance_{ji}$  the element $(-\lapl^m)_{ij}$ is the sum
	of the weights of all shortest paths.
\end{proof}
% Since the graph is generally directed, the Laplacian matrix does not have to be
% symmetric and the eigenvalues $\spatEig_i$ can be complex.

\section{System model}
We consider a network system consisting of $\numVeh$ identical agents which
exchange information about their outputs (either using a communication or
measurements).
All are modelled as SISO systems, where dynamic controllers are used. Each agent
is governed locally, therefore no central controller is used.

The plant model
$\vehTf(s)$ (the model of an agent without the controller) is given as
a~transfer function of arbitrary order
%\begin{equation} 
	$\vehTf(s) = \frac{\vehNum(s)}{\vehDen(s)}.$
%\end{equation} 
The output of the $i$th plant is denoted as
$\pos_i$.
The plant model is driven by the output of the dynamic controller $\contTf(s)$.
The controller is generally given as a transfer function
%\begin{equation}
	$\contTf(s) = \frac{\contNum(s)}{\contDen(s)}.$
%\end{equation}
The input to the controller is $\totalInp_i$. As the plant and the controller
are connected in series, the agent model is described by the \emph{scalar} open-loop transfer function
\begin{equation}
	\openLoop(s) = \vehTf(s)
	\contTf(s)=\frac{\vehNum(s)\contNum(s)}{\vehDen(s)\contDen(s)}.
	\label{eq:openLoop}
\end{equation}
The relative degree (the difference between the degree $\oldenord$ of the
denominator and the degree $\olnumord$ of the numerator) of $\openLoop(s)$ is
denoted as $\relOrd=\oldenord - \olnumord$.

The neighbor of an agent $i$ is
defined as an agent $j$ from which the agent $i$ can obtain information about
its output, that is, there exists an arc $\edge(\vertex_j, \vertex_i)$ in the
graph $\graph$. The
relative error of the $i$th agent is defined as
%\begin{equation}
	$\err_i	=  \sum_{j \in \neighborSet(i)} (\pos_j-\pos_i),$
%\end{equation}
where $\neighborSet(i)$ denotes the set of neighbors of the $i$th agent. 

Apart from the relative error $\err_i$, an exogenous input $\inp_i$ can be
acting at the input of the controller.
The total input to the controller thus is
\begin{equation}
	\totalInp_i	= \err_i + \inp_i = \Big(\displaystyle{\sum_{j \in
	\neighborSet(i)}} (\pos_j-\pos_i)\Big) + \inp_i,
\end{equation}
The input $\inp_i$ can be, for instance, the sum of reference values or some
other external signal such as error in measurement, disturbance etc. We treat
$\inp_i$ as a general signal.
 
\subsection{Problem statement}
The stacked vector of all inputs to the open loops is
\begin{equation} 
	\totalInp(s) = -\lapl \pos(s) + \inp(s), \label{eq:formationInput}
\end{equation}
with $\totalInp=[\totalInp_1, \ldots, \totalInp_\numVeh]^T$, $\pos=[\pos_1,
\ldots, \pos_\numVeh]^T$ and $\inp = [\inp_1, \ldots, \inp_\numVeh]^T$. The
matrix $\lapl$ is the graph Laplacian in (\ref{eq:laplacianGen}).

Now we can write the model of the overall formation as
\begin{IEEEeqnarray}{rCl}
	\pos(s) = \openLoop(s) \totalInp(s) = \openLoop(s) \left[ -\lapl \pos(s) +
	\inp(s) \right]. \label{eq:overallSyst}
\end{IEEEeqnarray}

We are interested in how an exogenous input acting at one selected agent affects
the output of another agent. We assume that there is only one input
$\inp_\contNode$, acting at the input of the agent with index $\contNode$. That is, the input
vector equals $\inp = [0,\ldots, 0,\inp_\contNode,
0,\ldots,0]^T = \canonVect_\contNode \inp_\contNode$. We will call the agent
with index $\contNode$ a \emph{controlling agent}. 

The output of interest is the output $\pos_\obsvNode$ of the
agent with index $\obsvNode$, i. e. the output vector is $\pos = [0, \ldots,
0, \pos_\obsvNode, 0, \ldots, 0]^T = \canonVect_\obsvNode \pos_\obsvNode$. We
call the agent with index $\obsvNode$ an \emph{observing agent}. 
The indices $\contNode$ and $\obsvNode$ can be arbitrary. We will use the statement
``from $\contNode$ to $\obsvNode$'' with the meaning of ``from the input
$\inp_\contNode$ acting at the agent $\contNode$ to the output $\pos_\obsvNode$
of the agent $\obsvNode$''.

Define a transfer function $\tranFunCo(s)$ as
\begin{equation}
	\tranFunCo(s)=\frac{\pos_\obsvNode(s)}{\inp_\contNode(s)}.
	\label{eq:tranFunDef}
\end{equation}

\begin{problem}
	Consider the transfer function $\tranFunCo(s)$ for a network of SISO agents
	connected by a directed graph.
	We study the structure of $\tranFunCo(s)$ and analyze how does
	$\tranFunCo(s)$ depend on the open loop model $\openLoop(s)$, the choice of agents $\contNode$ and $\obsvNode$ and the
	interconnection Laplacian $\lapl$.
\end{problem}

\subsection{Block diagonalization}

We can block diagonalize the system (\ref{eq:overallSyst}) using the transformation
$\pos = \matEigVect \transPos$. The matrix $\matEigVect=[\eigVect_{ij}]$ is a matrix of
(generalized) eigenvectors of the Laplacian, i. e. $\lapl \matEigVect = \matJ \matEigVect$
with $\matJ$ being the Jordan form of $\lapl$. With such a transform, the model
has a form
\begin{IEEEeqnarray}{rCl}
	\matEigVect \transPos(s) &=& \openLoop(s)\left[-\lapl \matEigVect \transPos(s)
	+ \inp(s) \right].
\end{IEEEeqnarray}
Separating $\transPos$ on the left-hand side using $\matJ =
\matEigVect^{-1} \lapl \matEigVect$ yields
\begin{IEEEeqnarray}{rCl}
	\left[ \idMat + \matJ \openLoop(s) \right] \transPos(s) &=&
	\openLoop(s) \matEigVect^{-1} \inp(s). \label{eq:blockDiagSyst} 
\end{IEEEeqnarray}
We can
define the transformed input to the system $\transInp(s) = \matEigVect^{-1}
\inp(s)$. Since $\openLoop(s)$ is a scalar transfer function,
(\ref{eq:blockDiagSyst}) is a block diagonal system, where each block has a size
of a Jordan block corresponding to an eigenvalue $\spatEig_i$ of
$\lapl$.
If the Jordan block for the eigenvalue $\spatEig_i$ has a size 1, then it can be
written using a transfer function
\begin{IEEEeqnarray}{rCl}
	\tranFun_i(s)\! =\! \frac{\transPos_i(s)}{\transInp_i(s)}\!=\!
	\frac{\openLoop(s)}{1+\spatEig_i \openLoop(s)}\! =\!
	\frac{\vehNum(s) \contNum(s)}{\vehDen(s) \contDen(s) +\spatEig_i \vehNum(s)
	\contNum(s)}. \label{eq:tranFun}
\end{IEEEeqnarray}
$\tranFun_i(s)$ is an output
feedback system with a feedback gain $\spatEig_i$.
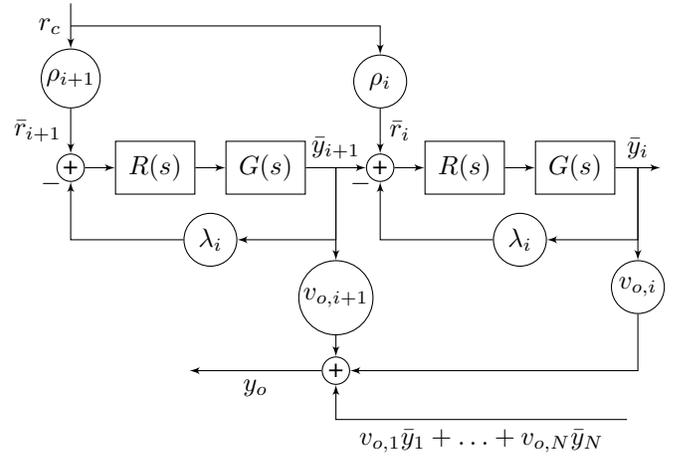
\begin{figure}
\centering
	\begin{tikzpicture}[auto,>=latex',node distance=0.6cm and 0.4cm,]
	%%first diagonal block
		\node [input] (input) {};
		\node [gainDown, below=of input] (inpGain1) {$\inputCoef_{i+1}$};
	    \node [sum, below=of inpGain1] (sum) {{+}}; 
	    \node [block, right=of sum] (controller) {$\contTf(s)$};
	    \node [block, right=of controller] (system) {$\vehTf(s)$};
	    \draw [->] (controller) -- node[name=u] {} (system);
	    \node [gainRight, below=of u] (gain) {$\spatEig_i$};
	    
	    %edges
	    \draw [->] (gain) -| node[pos=0.99] {$-$} 
        node [near end] {} (sum); 
        \draw [->] (sum) to (controller);
        \draw [->] (input) to node[name=u, swap] {$\inp_\contNode$}(inpGain1);
        \draw [->] (inpGain1) to node[name=ug1, swap] {$\bar{\inp}_{i+1}$}
        (sum);
    %%second diagonal block
    	\node [sum, right=of system, right=0.8cm] (sum2) {{+}};
    	\node [gainDown, above=of sum2] (inpGain2) {$\inputCoef_i$};
    	\node [block, right=of sum2] (controller2) {$\contTf(s)$};
    	\node [block, right=of controller2] (system2) {$\vehTf(s)$};
    	\draw [->] (controller2) to node[name=u2] {} (system2);
    	\node [output, right=of system2, right=0.6cm] (output)  {};
    	\node [gainRight, below=of u2] (gain2) {$\spatEig_i$};
    	%edges
    	\draw [->] (system2) -- node[name=y2] {$\transPos_{i}$} (output); 
    	\draw [->] (y2) |- (gain2); 
    	\draw [->] (gain2) -| node[pos=0.99] {$-$} 
        node [near end] {} (sum2); 
        \draw [->] (sum2) to (controller2);
        \draw [->] (u) -| node[name=ud2]{} (inpGain2);
        \draw [->] (inpGain2) to node[name=ug2] {$\bar{\inp}_{i}$} (sum2) ;
        %block interconnection
        \draw [->] (system) -- node[name=y1] {$\transPos_{i+1}$} (sum2);
        \draw [->] (y1) |- (gain);
        
        %% Output to positions
        \node [gainDown, below=of y2, below=3.5em] (v2) {$\eigVect_{\obsvNode,
        i}$};
        \node [gainDown, below=of y1, below=3.5em] (v1) {$\eigVect_{\obsvNode,
        i+1}$};
        \node [sum, below=of v1, below=0.8em] (sum3) {{+}};
        \node [output, left=of sum3, left=5em] (output3) {};
        \draw [->] (v1) to (sum3);
        \draw [->] (v2) |- (sum3);
        \draw [->] (y1) to (v1);
        \draw [->] (y2) to (v2);
        \draw [->] (sum3) to node{${\pos}_{\obsvNode}$} (output3);
        %other diagonal blocks
        \node [placeholder, below=of sum3, below=0.8em] (p1) {};
        \node [input, right=of p1, right=10.5em] (inpOthers) {};
        \draw [->] (inpOthers) -| node[name=others, near start]
        {$\eigVect_{\obsvNode, 1}\transPos_1 + \ldots + \eigVect_{\obsvNode,
        \numVeh}\transPos_\numVeh$} (sum3);

\end{tikzpicture}
	\caption{One diagonal block for the case of Jordan block of size 2. The
	eigenvalue $\spatEig_i$ acts as a gain in the feedback. Only one closed loop is
	present if the Jordan block has a size one. }
	\label{fig:diagonalizedSystemJordan}    
\end{figure}
If, on the other hand, the block in (\ref{eq:blockDiagSyst}) corresponds to
a~Jordan block of size 2, then its output can be written as the output of
a series connection of identical blocks, such as
\begin{IEEEeqnarray}{rCl}
	\transPos_i(s)\! = \! \frac{\openLoop(s)}{1+ \spatEig_i
	\openLoop(s)}\left( \transInp_i(s)\!+\! \frac{\openLoop(s)}{1+ \spatEig_i
	\openLoop(s)} \transInp_{i+1}(s)\right).
\end{IEEEeqnarray}
This easily generalizes to larger Jordan blocks. 
The structure is shown in Fig. \ref{fig:diagonalizedSystemJordan}. 

For simplicity, the derivations throughout the paper will be shown only for the
case where all Jordan block in $\matJ$ are simple --- the eigenvalues
$\spatEig_i$ have the same algebraic and geometric multiplicity.
All the proofs can be conducted the same way for blocks of larger size and all
the results remain valid.

If the eigenvalue $\spatEig_i$ is simple, the input to the $i$th diagonal block
in (\ref{eq:blockDiagSyst}) is the $i$th element of $\bar{\inp}$ and equals
%\begin{equation}
	$\bar{\inp}_{i} = \canonVect_i^T\matEigVect^{-1}\canonVect_{\contNode}
	\inp_\contNode = \inputCoef_i \inp_\contNode$ % \label{eq:diagBlockInput}
%\end{equation}
with
$\inputCoef_i=\canonVect_i^T\matEigVect^{-1}\canonVect_{\contNode}=(\matEigVect^{-1})_{i
\contNode}$.
Thus, the input $\inp_\contNode$ enters the block $\diagBlock_i(s)$
through the gain $\inputCoef_i$ and from
(\ref{eq:tranFun}) $\bar{\pos}_i(s)=\tranFun_i(s) \inputCoef_i
\inp_\contNode(s)$.
The output of the $i$th agent can be obtained using the
outputs of the blocks as
%\begin{equation}
	$\pos_i(s) = \sum_{j=1}^{\numVeh} \eigVect_{i j}
	\transPos_j(s).$
%\end{equation}
By setting $\transPos_j(s)=\tranFun_i(s) \inputCoef_i \inp_\contNode(s)$
in the previous equation, the output of the observing node is
\begin{equation}
	\pos_\obsvNode(s) = \left[\sum_{i=1}^{\numVeh} \eigVect_{\obsvNode i}
	\inputCoef_i \tranFun_i(s)\right] \inp_\contNode(s) = \tranFunCo(s)
	\inp_\contNode(s). \label{eq:outputTranFunObsv}
\end{equation}
This also expresses the transfer function $\tranFunCo(s)$ in
(\ref{eq:tranFunDef}).

\section{Transfer functions in graphs}
In this section we derive the structure of the transfer function
$\tranFunCo(s)$ between the input $\inp_\contNode$ of the controlling node 
and output of the observing node $\pos_\obsvNode$.

\subsection{Single integrator dynamics}
Before investigating the general case with higher-order dynamics, let us discuss
a standard single-integrator case. We will later in the paper relate
it to the higher-order dynamics. For the single single-integrator
case $\openLoop(s) = \frac{1}{s}$ and the state-space description of the network system is
%\begin{IEEEeqnarray}{rCl}
	$\dot{\stateVar} = -\lapl \stateVar + \canonVect_\contNode \inp_\contNode,
	\quad \pos_\obsvNode = \canonVect_\obsvNode^T \stateVar.$
%\end{IEEEeqnarray}
Let the single-integrator transfer function from $\inp_\contNode$ to
$\pos_\obsvNode$ be a fraction of two polynomials as
\begin{equation}
	\tranFunCo(s)=\frac{\numPol(s)}{\denPol(s)}. \label{eq:tranFunSingle}
\end{equation} 
The denominator polynomial $\denPol(s)$ is given as 
\begin{equation}
		\denPol(s)\!=\! \det(s\idMat_{\numVeh} + \lapl)\! = \!
		s^{\numVeh}\! +\! \charPolLaplCoef_{\numVeh-1} s^{\numVeh-1}\! +\! \ldots\! +\!
		\charPolLaplCoef_1 s \!+\! \charPolLaplCoef_0.
		\label{eq:charPoly}
\end{equation}
$\denPol(s)$ is a characteristic polynomial of $-\lapl$. The roots of $\denPol$
(i.
e., the poles of $\tranFunCo(s)$ for single integrator dynamics) are
$-\spatEig_i$, the eigenvalues of $-\lapl$. The coefficient
$\charPolLaplCoef_0=0$ because there is always a zero eigenvalue of $-\lapl$.
If the zero eigenvalue is simple, it is known that the coefficients are
\begin{equation}
	\charPolLaplCoef_{\numVeh-1} = \sum_{i=1}^{\numVeh} \spatEig_i,
	\:\:
	\charPolLaplCoef_{\numVeh-2} = \sum_{\mathclap{i=1, j=1, i \neq j}}^{\numVeh}
	\spatEig_i \spatEig_j, \:\: \ldots, \:\:
	\:\:
	\charPolLaplCoef_1 = \prod_{i=2}^{\numVeh} \spatEig_i.	
	\label{eq:charLaplCoefDef}
\end{equation}
The other
terms $\charPolLaplCoef_k$ are sums of all products of $k$ eigenvalues.

The numerator polynomial is given as $\numPol(s) =
\charPolNumCoefSingle_{\orderQ} s^{\orderQ} + \ldots + \charPolNumCoefSingle_1 s
+ \charPolNumCoefSingle_0$. It was
	shown in \cite{Briegel2011, Torres2013} that $\orderQ=\numVeh - \distanceCO -
	1$. We denote the $\orderQ$ roots of $\numPol(s)$ as
$-\spatEigZ_i$, so
\begin{equation}
	\numPol(s) =
	\charPolNumCoefSingle_{\orderQ}
	(s+\spatEigZ_1)(s+\spatEigZ_2)\ldots(s+\spatEigZ_{\orderQ}).
	\label{eq:numPolGamma} \end{equation} 

The coefficients of $\denPol$ and $\numPol$ have a graph-theoretic
representation. For the denominator polynomial $\denPol(s)$ they are given
by \cite[Proposition 2]{Chebotarev2002} as $
	\charPolLaplCoef_i = \pathWeight\left( \forestSet_{\numVeh-i} \right)$,
which is the weight of the set of all diverging forests in the graph with
$\numVeh -i$ arcs. This also explains why $\charPolLaplCoef_0=0$ --- there is no
spanning forest with $\numVeh$ arcs (there has to be a cycle in $\numVeh$ arcs). 

The numerator polynomial can be calculated
as
\begin{equation}
	\numPol(s) = \canonVect_\obsvNode^T \, \text{adj}(s \idMat + \lapl)\,
	\canonVect_\contNode, \label{eq:numPolAsAdjugateMat}
\end{equation}
which is the $\obsvNode,\contNode$th cofactor of $(s \idMat+\lapl)$. It is shown
in \cite[Proposition 3]{Chebotarev2002} that
\begin{equation}
	\text{adj}(s \idMat + L) = \sum_{i=0}^{\numVeh} Q_i s^{\numVeh-i-1},
	\label{eq:cofactorAsForests}
\end{equation}

\begin{lem}
The coefficients $\charPolNumCoefSingle_i$ are given as
	$\charPolNumCoefSingle_i = \pathWeight(\forestSet_{\numVeh-i-1}^{\contNode
	\rightarrow \obsvNode}).$
\label{lem:coefForestWeights}
\end{lem}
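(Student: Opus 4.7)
The plan is to prove the statement by direct computation, chaining together three facts already assembled in the excerpt: the cofactor expression (\ref{eq:numPolAsAdjugateMat}) for $\numPol(s)$, the forest expansion (\ref{eq:cofactorAsForests}) of $\text{adj}(s\idMat + \lapl)$, and the definition (\ref{eq:weightOfForestCofactor}) of the entries of $Q_k$.

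First I would substitute (\ref{eq:cofactorAsForests}) into (\ref{eq:numPolAsAdjugateMat}), obtaining
\begin{equation*}
\numPol(s) = \canonVect_\obsvNode^T \Big(\sum_{k=0}^{\numVeh} Q_k \, s^{\numVeh-k-1}\Big) \canonVect_\contNode = \sum_{k=0}^{\numVeh} (Q_k)_{\obsvNode \contNode} \, s^{\numVeh-k-1}.
\end{equation*}
Then I would invoke (\ref{eq:weightOfForestCofactor}), which identifies $(Q_k)_{\obsvNode \contNode}$ with $\pathWeight(\forestSet_k^{\contNode \rightarrow \obsvNode})$. Finally, matching the coefficient of $s^i$ on both sides requires the index substitution $k = \numVeh - i - 1$, which yields $\charPolNumCoefSingle_i = \pathWeight(\forestSet_{\numVeh-i-1}^{\contNode \rightarrow \obsvNode})$, as claimed.

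There is essentially no obstacle beyond careful bookkeeping of the index shift between the power of $s$ and the number of arcs in the forest; the only thing worth a brief sanity check is that the range of nonzero coefficients is consistent with what is known about $\numPol(s)$. In particular, $\pathWeight(\forestSet_k^{\contNode \rightarrow \obsvNode})$ vanishes unless $k \geq \distanceCO$ (a spanning out-forest with a tree rooted at $\contNode$ that reaches $\obsvNode$ must use at least $\distanceCO$ arcs on the $\contNode \to \obsvNode$ path), and also vanishes for $k = \numVeh$ since no spanning out-forest can use $\numVeh$ arcs without forming a cycle. Translated through $i = \numVeh - k - 1$, this recovers $\charPolNumCoefSingle_i = 0$ for $i > \numVeh - \distanceCO - 1$ and $\charPolNumCoefSingle_{-1} = 0$, matching the degree bound $\orderQ = \numVeh - \distanceCO - 1$ quoted from \cite{Briegel2011, Torres2013}. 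No further argument is needed.
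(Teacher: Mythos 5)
Your proposal is correct and follows exactly the same route as the paper's own proof: substituting the forest expansion (\ref{eq:cofactorAsForests}) into the cofactor formula (\ref{eq:numPolAsAdjugateMat}) and reading off the coefficient of $s^i$ as the $(\obsvNode,\contNode)$ entry of $Q_{\numVeh-i-1}$ via (\ref{eq:weightOfForestCofactor}). Your added sanity check on the vanishing range of the coefficients is a nice touch but not part of (or needed for) the paper's argument.
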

\begin{proof}
	The polynomial $\numPol(s)$ equals the $\obsvNode, \contNode$ element of
	$\text{adj}(s \idMat + \lapl)$ (\ref{eq:numPolAsAdjugateMat}). The
	coefficient at $s^i$ in $\numPol(s)$ is by (\ref{eq:cofactorAsForests}) equal
	to the $\obsvNode, \contNode$ element of matrix $Q_{\numVeh-i-1}$, i.e.,
	$\charPolNumCoefSingle_i = q^{\numVeh-i-1}_{\obsvNode \contNode}$.
	By (\ref{eq:weightOfForestCofactor}) this element also must be equal to
	$\pathWeight(\forestSet_{\numVeh-i-1}^{\contNode \rightarrow \obsvNode})$.
\end{proof}

This indicates that the coefficients $\charPolNumCoefSingle_i$ are given as the
weights of the set of all spanning diverging forests with $\numVeh-i-1$ arcs
which contain $\obsvNode$ and diverge from $\contNode$.
In the case of unweighted graph the weight reduces to the number of such
out-forests.

While the coefficients in the denominator polynomial correspond to all
diverging forests with the given number of arcs, the numerator polynomial takes
only those spanning out-forests containing the controlling and the observing
nodes.

% In this section we investigate the zeros of the transfer function from
% controlling node $\contNode$ to the observing node $\obsvNode$.
\subsection{Higher order dynamics}
Now let us go back to higher-order systems. We have the definition of
$-\spatEigZ_i$ as the roots of $\charPolNum(s)$ in (\ref{eq:numPolGamma}), so
we can state the main theorem of the paper. It relates the single-integrator
systems to the higher-order dynamics. 
\begin{thm}
The transfer function $\tranFunCo(s)$ can be
written as 
\begin{equation}
	\tranFunCo(s)\!=\!\pathWeightCo
	\frac{[\vehNum(s)\contNum(s)]^{1+\distanceCO}\;
	\displaystyle{\prod_{i=1}^{\mathclap{\numVeh-1-\distanceCO}}}\;
	\big(\vehDen(s)\contDen(s) + \spatEigZ_i \vehNum(s)\contNum(s)\big)}{\displaystyle{\prod_{i=1}^{\numVeh}}
	\Big(\vehDen(s)\contDen(s) + \spatEig_i \vehNum(s)\contNum(s)\Big)},
	\label{eq:tranFunProductForm}
\end{equation}
where $\pathWeight_{\contNode
	\obsvNode}=
	\charPolNumCoefSingle_{\numVeh-\distanceCO-1}$ is the sum of weights of all shortest paths from $\contNode$ to $\obsvNode$, $\distanceCO$ is the
	distance from $\contNode$ to $\obsvNode$ and the gains $-\spatEigZ_i$ defined
	in (\ref{eq:numPolGamma}) is the root of $\numPol(s)$ .
\label{lem:totTranFun}
\end{thm}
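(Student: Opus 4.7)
The plan is to reduce the statement to the single-integrator case already analysed in \eqref{eq:tranFunSingle}--\eqref{eq:numPolGamma} by means of a simple algebraic substitution, and then to identify the leading coefficient of the resulting numerator via Lemma~\ref{lem:coefForestWeights}.

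First I would start from \eqref{eq:outputTranFunObsv}, which together with \eqref{eq:tranFun} gives
\begin{equation*}
\tranFunCo(s) \;=\; \sum_{i=1}^{\numVeh} \eigVect_{\obsvNode i}\,\inputCoef_i\,
\frac{\vehNum(s)\contNum(s)}{\vehDen(s)\contDen(s) + \spatEig_i \vehNum(s)\contNum(s)}.
\end{equation*}
Dividing the numerator and denominator of each summand by $\vehNum(s)\contNum(s)$, I can rewrite this as
\begin{equation*}
\tranFunCo(s) \;=\; \sum_{i=1}^{\numVeh} \eigVect_{\obsvNode i}\,\inputCoef_i\,
\frac{1}{\xi(s) + \spatEig_i}, \qquad \xi(s):=\frac{\vehDen(s)\contDen(s)}{\vehNum(s)\contNum(s)}.
\end{equation*}
The key observation is that the right-hand side is precisely the single-integrator transfer function $\tranFunCo^{\text{SI}}(x)=\sum_i \eigVect_{\obsvNode i}\inputCoef_i /(x+\spatEig_i) = \numPol(x)/\denPol(x)$ evaluated at $x=\xi(s)$. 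This step is the heart of the argument and needs no more than comparing the two partial-fraction expansions.

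Next I would plug in the factorised single-integrator forms \eqref{eq:charPoly}--\eqref{eq:numPolGamma}: $\denPol(x)=\prod_{i=1}^{\numVeh}(x+\spatEig_i)$ and $\numPol(x)=\charPolNumCoefSingle_{\orderQ}\prod_{i=1}^{\orderQ}(x+\spatEigZ_i)$, with $\orderQ=\numVeh-\distanceCO-1$. Setting $x=\xi(s)$ and clearing the common denominator by multiplying both the top and the bottom by $[\vehNum(s)\contNum(s)]^{\numVeh}$ turns each factor $(\xi(s)+\spatEig_i)$ into $(\vehDen(s)\contDen(s)+\spatEig_i\vehNum(s)\contNum(s))/\vehNum(s)\contNum(s)$, and similarly for the $\spatEigZ_i$ factors. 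A simple count of the surviving powers of $\vehNum(s)\contNum(s)$ is $\numVeh-\orderQ = \distanceCO+1$, yielding exactly the $[\vehNum(s)\contNum(s)]^{1+\distanceCO}$ factor in \eqref{eq:tranFunProductForm}.

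Finally I would identify the leading coefficient $\charPolNumCoefSingle_{\orderQ}$. By Lemma~\ref{lem:coefForestWeights}, $\charPolNumCoefSingle_{\orderQ} = \pathWeight(\forestSet^{\contNode \rightarrow \obsvNode}_{\numVeh-\orderQ-1})=\pathWeight(\forestSet^{\contNode \rightarrow \obsvNode}_{\distanceCO})$. Because any spanning out-forest with exactly $\distanceCO$ arcs that contains a tree rooted at $\contNode$ and passing through $\obsvNode$ must spend all its arcs on a shortest path from $\contNode$ to $\obsvNode$ (the remaining vertices appearing as trivial singleton trees), this weight equals $\pathWeightCo$, the sum of weights of all shortest paths from $\contNode$ to $\obsvNode$. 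I expect the main obstacle to be bookkeeping: making sure the exponent $1+\distanceCO$ of $\vehNum(s)\contNum(s)$ drops out correctly and that the substitution argument carries over to defective Jordan blocks through the cascade representation shown in Fig.~\ref{fig:diagonalizedSystemJordan}, but as the authors note, the proof is structurally identical in that case.
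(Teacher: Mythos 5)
Your proof is correct, and it takes a genuinely different route from the paper's. The paper proves Theorem \ref{lem:totTranFun} by brute force: it puts the sum (\ref{eq:outputTranFunObsv}) over a common denominator, expands each $\prodPartNum_i(s)$ in powers of $\olnum,\olden$ with coefficients that are elementary symmetric functions of the $\spatEig_j$ ($j\neq i$), converts the sums $\sum_i \inputCoef_i\eigVect_{\obsvNode i}\spatEig_i^k$ into $(\lapl^k)_{\obsvNode\contNode}$ via Lemma \ref{lem:eigSum}, and then matches the resulting coefficients against Chebotarev's expansion of $\mathrm{adj}(s\idMat+\lapl)$ to conclude they coincide with the $\charPolNumCoefSingle_i$ of the single-integrator numerator. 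Your substitution $x=\xi(s)=1/\openLoop(s)$ collapses all of that bookkeeping into the single observation that $\tranFunCo(s)=\canonVect_\obsvNode^T\big(\openLoop(s)^{-1}\idMat+\lapl\big)^{-1}\canonVect_\contNode=\numPol(\xi(s))/\denPol(\xi(s))$, after which the product form and the exponent $\distanceCO+1$ fall out of clearing denominators, given $\orderQ=\numVeh-\distanceCO-1$ from (\ref{eq:numPolGamma}). This is cleaner and, if you phrase the key identity through the resolvent rather than through the partial-fraction expansion, it handles defective Jordan blocks automatically instead of by the hand-waving both you and the paper resort to. What your route does \emph{not} deliver by itself is the intermediate identity $\charPolNumCoef_i=\sum_{j}\charPolLaplCoef_{i+j+1}(-\lapl)^j_{\obsvNode\contNode}$ and the re-derivation of $\deg\numPol=\numVeh-\distanceCO-1$ from Lemma \ref{lem:numwalks}; you instead import that degree count from (\ref{eq:numPolGamma}), which the paper states (with citation) before the theorem, so this is legitimate. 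Your identification of the leading coefficient $\charPolNumCoefSingle_{\orderQ}$ with $\pathWeightCo$ via Lemma \ref{lem:coefForestWeights} --- a spanning out-forest with exactly $\distanceCO$ arcs diverging from $\contNode$ and containing $\obsvNode$ is forced to be a shortest path plus singletons --- is exactly the argument the paper uses.
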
 
The proof can be found in the appendix. It is clear that the roots
$-\spatEigZ_i$ of the single-integrator numerator polynomial $\numPol(s)$ have
the same role as the roots $-\spatEig_i$ of the denominator polynomial
$\denPol(s)$. As can be seen, the structure of the terms in the numerator and
the denominator of (\ref{eq:tranFunProductForm}) is $\vehDen(s) \contDen(s) + k
\,\,\vehNum(s) \contNum(s)$, where $k=\spatEig_i$ in the denominator and
$k=\spatEigZ_i$ in the numerator. In addition, such structure is the same as the
structure of the characteristic polynomial of an output-feedback system with the
open loop $\openLoop(s)=k \, \frac{\vehNum(s) \contNum(s)}{\vehDen(s) \contDen(s)}$ with the gain
$k=\spatEig_i$ or $k=\spatEigZ_i$.

If both $\spatEigZ_i$ and $\spatEig_i$ are real, the poles and
zeros of (\ref{eq:tranFunProductForm}) lie on the root-locus curve (see Fig.
\ref{fig:exampleGraphPoles} for an example). The root-locus curve is
defined as a location of roots of $\vehDen(s) \contDen(s) + k \, \vehNum(s)
\contNum(s)$ as a function of $k \in (0, \infty)$.
Note that both the terms in the numerator and denominator of
(\ref{eq:tranFunProductForm}) have this form.

A particular case of the product form (\ref{eq:tranFunProductForm}) was shown in
\cite[Proposition 3]{Lin2012a}, where the authors considered single integrators ($\openLoop(s)=1/s$) and unidirectional interaction.

The product form in  (\ref{eq:tranFunProductForm}) can be written also as
\begin{equation}
	\tranFunCo(s)=\pathWeightCo
	\prod_{i=1}^{\numVeh-\distanceCO-1}\tranFunZ_i(s)
	\prod_{j=\numVeh-\distanceCO}^{\numVeh}\tranFun_j(s),	
	\label{eq:productZiTi}
\end{equation}
with  
\begin{IEEEeqnarray}{rCl}
	\tranFunZ_i(s) &=&
	\frac{\vehDen(s)\contDen(s)+\spatEigZ_i\vehNum(s)\contNum(s)}{\vehDen(s)\contDen(s)+\spatEig_i
	\vehNum(s)\contNum(s)}, \\ 
	\tranFun_j(s) &=& \frac{\vehNum(s)\contNum(s)}{\vehDen(s)\contDen(s)+\spatEig_j
	\vehNum(s)\contNum(s)}.
\end{IEEEeqnarray}
All eigenvalues $\spatEig_i$ must be used, so $\numVeh-\distanceCO-1$ of them go
to $\tranFunZ_i(s)$ and the remaining $\distanceCO+1$ to $\tranFun_j(s)$. The
transfer functions $\tranFunZ_i(s)$ are biproper and the numerator differs from the
denominator only in the multiplication factor $\spatEigZ_i$. The transfer
functions $\diagBlock_j(s)$ are standard output feedback systems in
(\ref{eq:tranFun}).

The network system (\ref{eq:overallSyst}) of identical agents with arbitrary
interconnection was transformed in equation (\ref{eq:productZiTi}) to a series
connection (product of transfer functions) of non-identical (but structured)
subsystems. In many cases, such as in determining a frequency response, the
series connection is much easier to analyze \cite{Herman2013b}. The series connection is illustrated in Fig.
\ref{fig:seriesForm}.

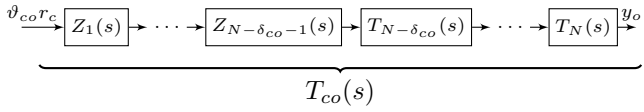
\begin{figure}[t]
\centering
	\begin{tikzpicture}[auto, >=latex']
\scriptsize
	\node[placeholder] (inp) {};
	\node[block]  (t1) [right=of inp, right=2.3em] {$\tranFunZ_1(s)$};	
	\draw [->] (inp) to node[name=u, near start] {$\pathWeightCo \inp_\contNode$}(t1); 
	\node  (dots1)	[right=of t1, right=1em] {$\ldots$}; 
	\draw [->] (t1) to (dots1); 
	\node[block]  (t2) [right=of dots1, right=1em] 	{$\tranFunZ_{\numVeh-\distanceCO-1}(s)$}; 
	\draw [->] (dots1) to (t2); 
	\node[block]  (z1) [right=of t2,	right=1em]	{$\tranFun_{\numVeh-\distanceCO}(s)$}; 
	\draw [->] (t2) to (z1);
	\node (dots2) [right=of z1,	right=1em] {$\ldots$}; 
	\draw [->] (z1) to (dots2);  
	\node[block]  (z2) [right=of dots2,	right=1em]	{$\tranFun_{\numVeh}(s)$}; 
	\draw [->] (dots2) to (z2); 
	\node[placeholder] (out) [right=of z2, right=1em] {};
	\draw [->] (z2) to node[name=y, near end] {$\pos_\obsvNode$}(out);
	\draw[decorate,decoration={brace},thick] (8.4,-0.5) to
	node[midway,below=0.4em] (bracket) {\normalsize $\tranFunCo(s)$} (0.4,-0.5);
\end{tikzpicture}
	\caption{Series form of the transfer function $\tranFunCo(s)$.}
	\label{fig:seriesForm}
\end{figure}

As the numerator of the open loop $\vehNum(s)\contNum(s)$ is present
for $\distanceCO+1$ times in (\ref{eq:tranFunProductForm}), we have the
following corollary.
\begin{cor}
	The transfer function $\tranFunCo(s)$ has $\distanceCO+1$ multiple zeros
	at the locations of the zeros of the open loop, i. e. roots of $\vehNum(s)
	\contNum(s)=0$.
	\label{cor:openLoopZeros} 
\end{cor}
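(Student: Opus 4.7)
The plan is to read the corollary directly off the product form established in Theorem~\ref{lem:totTranFun}. In (\ref{eq:tranFunProductForm}) the factor $[\vehNum(s)\contNum(s)]^{1+\distanceCO}$ appears explicitly in the numerator, so every root $s_0$ of $\vehNum(s)\contNum(s)$ contributes a zero of multiplicity exactly $1+\distanceCO$ to the numerator polynomial. The only work remaining is to check that none of these zeros are cancelled by matching roots of the denominator, and that the remaining numerator factors do not artificially raise or lower the multiplicity at $s_0$.

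To that end, I would evaluate each of the remaining factors at such an $s_0$. A denominator term becomes
\begin{equation*}
\vehDen(s_0)\contDen(s_0) + \spatEig_i \vehNum(s_0)\contNum(s_0) = \vehDen(s_0)\contDen(s_0),
\end{equation*}
and similarly every numerator factor of the form $\vehDen(s_0)\contDen(s_0)+\spatEigZ_i \vehNum(s_0)\contNum(s_0)$ collapses to $\vehDen(s_0)\contDen(s_0)$. Under the standing assumption that $\openLoop(s)=\vehNum(s)\contNum(s)/(\vehDen(s)\contDen(s))$ is written in lowest terms, i.e.\ the open loop has no pole/zero cancellation at $s_0$, this common value is nonzero. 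Hence the remaining numerator factors contribute no further zero at $s_0$ and the denominator factors contribute no pole at $s_0$ that could annihilate the zero coming from $[\vehNum(s)\contNum(s)]^{1+\distanceCO}$. The multiplicity of $s_0$ as a zero of $\tranFunCo(s)$ is therefore exactly $1+\distanceCO$, which is the claim.

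There is essentially no obstacle here beyond stating the no-cancellation hypothesis cleanly; the corollary is a direct consequence of the product form, the only content being that the exponent on $\vehNum(s)\contNum(s)$ in (\ref{eq:tranFunProductForm}) is precisely $\distanceCO+1$ and that this exponent survives the simplification of the rational function.
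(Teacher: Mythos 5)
Your proof is correct and follows the same route as the paper, which simply reads the corollary off the factor $[\vehNum(s)\contNum(s)]^{1+\distanceCO}$ in the product form (\ref{eq:tranFunProductForm}) without further comment. Your additional check that no cancellation occurs --- evaluating every remaining numerator and denominator factor at a root $s_0$ of $\vehNum(s)\contNum(s)$ and invoking coprimeness of the open loop to conclude $\vehDen(s_0)\contDen(s_0)\neq 0$ --- is a worthwhile refinement that the paper leaves implicit, but it does not change the underlying argument.
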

These zeros can be partly chosen by the designer of the network, since he can
choose the controller numerator $\contNum(s)$ freely. Contrary, the
zeros of $\tranFunZ_i(s)$ are given by the interconnection matrix in the same way as
the poles are. 

A relative degree comes immediately from Theorem \ref{lem:totTranFun}.
\begin{cor}
	Let $\relOrd$ be the relative degree of $\openLoop(s)$. Then the relative
	degree $\relOrd_{\contNode \obsvNode}$ of $\tranFunCo(s)$ is
	%\begin{equation}
		$\relOrd_{\contNode \obsvNode} =  (\distanceCO + 1)\relOrd.$
		%\label{eq:relOrd}
	%\end{equation}
	\label{cor:relDegree}
	% is the length of the
	%shortest path from node $\contNode$ to node $\obsvNode$.
\end{cor}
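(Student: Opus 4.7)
The plan is to obtain the relative degree by a direct degree count on the product form (\ref{eq:tranFunProductForm}) given by Theorem \ref{lem:totTranFun}. First I would fix the conventions: $\vehDen(s)\contDen(s)$ has degree $\oldenord$, $\vehNum(s)\contNum(s)$ has degree $\olnumord$, and $\relOrd = \oldenord - \olnumord \geq 0$ by properness of $\openLoop(s)$.

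The key observation is that every individual factor of the form $\vehDen(s)\contDen(s) + k\,\vehNum(s)\contNum(s)$ appearing in either the numerator or the denominator of (\ref{eq:tranFunProductForm}) has degree exactly $\oldenord$. When $\relOrd > 0$ the leading term comes unambiguously from $\vehDen(s)\contDen(s)$, so there is no degree drop regardless of the value of $k$. When $\relOrd = 0$ the two leading coefficients add; I would note that for generic $\spatEig_i$ and $\spatEigZ_i$ no cancellation occurs (and in fact non-cancellation is already implicit in the proper root-locus interpretation mentioned just after Theorem \ref{lem:totTranFun}).

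Granting this, the denominator of (\ref{eq:tranFunProductForm}) is a product of $\numVeh$ such factors, hence has degree $\numVeh \oldenord$. The numerator contributes $(1+\distanceCO)\olnumord$ from the block $[\vehNum(s)\contNum(s)]^{1+\distanceCO}$ plus $(\numVeh-1-\distanceCO)\oldenord$ from the remaining product. Subtracting,
\begin{equation*}
\relOrd_{\contNode \obsvNode} = \numVeh \oldenord - (1+\distanceCO)\olnumord - (\numVeh-1-\distanceCO)\oldenord = (1+\distanceCO)(\oldenord-\olnumord) = (\distanceCO+1)\relOrd,
\end{equation*}
which is the claim.

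The only step that requires any thought at all is the non-degeneracy of the individual feedback factors; the rest is bookkeeping. I expect this to be the main (and still rather mild) obstacle, and it can be dispatched by either restricting to the strictly proper case $\relOrd \geq 1$ or by invoking the root-locus/characteristic-polynomial interpretation already established in the surrounding text.
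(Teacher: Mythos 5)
Your proposal is correct and is essentially the paper's own argument: the paper counts relative degrees factor by factor in the series form (\ref{eq:productZiTi}) --- $\numVeh-\distanceCO-1$ biproper blocks $\tranFunZ_i(s)$ of relative degree $0$ and $\distanceCO+1$ blocks $\tranFun_j(s)$ of relative degree $\relOrd$ --- which is the same degree bookkeeping you perform directly on (\ref{eq:tranFunProductForm}). Your explicit remark that each factor $\vehDen(s)\contDen(s)+k\,\vehNum(s)\contNum(s)$ must attain full degree $\oldenord$ (only a concern when $\relOrd=0$) is a point the paper leaves implicit, so if anything you are slightly more careful.
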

\begin{proof}
	There is $\numVeh-\distanceCO-1$ blocks of type $\tranFunZ_i(s)$ in
	(\ref{eq:tranFunProductForm}), which have relative degree 0. Then there
	is $\distanceCO+1$ terms $\tranFun_i(s)$ which have relative degree $\relOrd$.
	Hence, $\relOrd_{\contNode \obsvNode}=(\distanceCO+1)\relOrd$.
\end{proof}
The relative degree strongly affects the transients. The transfer functions
$\tranFunZ_{i}(s)$ have relative order 0, so the input gets directly to the
output. The $\distanceCO+1$ terms $\tranFun_j(s)$ slow down the
transient. Quite clearly, the further the control and observer nodes are from
each other, the slower the transient will be.

Another immediate result is the steady-state value.
\begin{cor}
	For at least one integrator in the open loop, the steady-state gain of any
	transfer function in the formation is
	\begin{equation}
		\tranFunCo(0) = \pathWeightCo \frac{\prod_{i=1}^{\numVeh-1-\distanceCO}
		\spatEigZ_i}{\prod_{i=1}^{\numVeh} \spatEig_i}.
	\end{equation}
	\label{cor:dcgain}
\end{cor}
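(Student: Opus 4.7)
I would evaluate the product form of Theorem \ref{lem:totTranFun} directly at $s=0$. The assumption that $\openLoop(s)$ has at least one integrator means $\vehDen(0)\contDen(0) = 0$, while $\vehNum(0)\contNum(0)$ is generically nonzero. Consequently every factor of the form $\vehDen(0)\contDen(0) + k\,\vehNum(0)\contNum(0)$ appearing in both the numerator (with $k = \spatEigZ_i$) and the denominator (with $k = \spatEig_i$) of (\ref{eq:tranFunProductForm}) collapses at $s=0$ to $k\,\vehNum(0)\contNum(0)$.

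The remaining work is bookkeeping. In the numerator, the leading factor $[\vehNum(s)\contNum(s)]^{1+\distanceCO}$ contributes $1+\distanceCO$ powers of $\vehNum(0)\contNum(0)$, and the $\numVeh - 1 - \distanceCO$ simplified product terms contribute one each, giving $[\vehNum(0)\contNum(0)]^{\numVeh}\prod_{i=1}^{\numVeh-1-\distanceCO} \spatEigZ_i$ in total. The denominator similarly collapses to $[\vehNum(0)\contNum(0)]^{\numVeh}\prod_{i=1}^{\numVeh} \spatEig_i$. The common factor $[\vehNum(0)\contNum(0)]^{\numVeh}$ cancels between numerator and denominator, leaving exactly $\pathWeightCo\prod_i \spatEigZ_i / \prod_i \spatEig_i$ as claimed.

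The only delicate point is the denominator factor associated with $\spatEig_1 = 0$, which at $s=0$ equals $\vehDen(0)\contDen(0) = 0$ and is precisely the origin of the integrator action of the overall closed loop; the stated identity is therefore to be read in the formal/limit sense, consistent with the zero eigenvalue being kept inside $\prod_{i=1}^{\numVeh}\spatEig_i$. Beyond this bookkeeping, the argument is essentially a one-line substitution into (\ref{eq:tranFunProductForm}), so I do not expect any substantive obstacle.
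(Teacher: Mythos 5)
Your proposal is correct and is essentially the paper's own argument: the paper's proof simply notes that at least one integrator gives $\vehDen(0)\contDen(0)=0$ and plugs this into (\ref{eq:tranFunProductForm}); you carry out the same substitution with the power-counting of $\vehNum(0)\contNum(0)$ made explicit. Your remark about the $\spatEig_1=0$ factor is a sensible clarification that the paper itself addresses only in the discussion following the corollary.
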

\begin{proof}
	For at least one integrator in the open loop, $\vehDen(0)\contDen(0)=0$. After
	plugging this to (\ref{eq:tranFunProductForm}), the result follows.
\end{proof}
At least one integrator in the open loop is a common requirement to allow an
uncontrolled network system to have a nonzero equilibrium. 

 The most important fact
following from the Corollary \ref{cor:dcgain} is that the steady-state gain \emph{does not depend} on the open-loop model, as long as
there is at least one integrator in $\openLoop(s)$. To change the steady-state value,
the interconnection structure must be modified.

We will discuss two cases. First, assume that $\spatEigZ_i \neq 0, \:
\forall i$. 
Then the eigenvalue $\spatEig_1=0$ of the Laplacian in the denominator makes the
steady-state gain infinite. This happens when there is no independent leader in
the network system.

If, on the other hand, there is $\spatEigZ_1=0$, the eigenvalue at the origin
$\spatEig_1=0$ will be cancelled. As a result, the steady-state value is bounded. 
The presence of $\spatEigZ_1=0$ is usually caused by the presence of an
independent leader in the system. Such a leader cannot
be controlled from the network system, hence the zero eigenvalue will be
uncontrollable, causing the pole-zero cancellation.

\section{General transfer functions}
\label{sec:otherInputs}
So far we have analyzed properties of a transfer function from the input of the
controller of agent $\contNode$ to the output of the agent $\obsvNode$. However,
we might also be interested in a transfer function from a general input $\inpGen_\contNode$ at the
controlling node to a general output $\outGen_\obsvNode$ of the observing node.
In this section we show that the general transfer function has two parts: an
open-loop part and a network part.

There is always at least one zero
eigenvalue of $\lapl$, therefore in (\ref{eq:tranFunProductForm})
$\vehDen(s)\contDen(s)+\spatEig_1 \vehNum(s)\contNum(s)=\vehDen(s)\contDen(s),$
 which is the denominator of the open loop $\openLoop(s)$.
Also at least one numerator polynomial of the open loop $\vehNum(s)\contNum(s)$ is present in
$\tranFunCo(s)$.
Then the transfer function in (\ref{eq:tranFunProductForm}) can be written
as
\begin{IEEEeqnarray}{rCl}
	\tranFunCo(s) &=& \pathWeightCo \openLoop(s)
	\frac{\big(\vehNum(s)\contNum(s)\big)^{\distanceCO}\displaystyle
	\prod_{j=1}^{\orderQ} \vehDen(s)\contDen(s) + \spatEigZ_j \vehNum(s) \contNum(s)}{\prod_{i=2}^{\numVeh} \vehDen(s)\contDen(s) + \spatEig_i
	\vehNum(s) \contNum(s)} \nonumber \\ 
	&=& \openLoop(s) \tranFunFeedbackPartCo(s),
	\label{eq:tranFunFeedback}
\end{IEEEeqnarray}
where 
\begin{equation}
 \tranFunFeedbackPartCo(s) = 
\pathWeightCo \frac{\big(\vehNum(s)\contNum(s)\big)^{\distanceCO}
\prod_{j=1}^{\orderQ} \big(\vehDen(s)\contDen(s) + \spatEigZ_j \vehNum(s)
\contNum(s)\big)}{\prod_{i=2}^{\numVeh} \big(\vehDen(s)\contDen(s) + \spatEig_i
\vehNum(s) \contNum(s)\big)}
\label{eq:networkPart}
\end{equation}
is the network part of
$\tranFunCo(s)$ and $\openLoop(s)$ is the open-loop.

Let $\openLoopPart(s)$ be the transfer function in open-loop of one agent from
the desired input $\inpGen_i$ (e. g. a reference or a disturbance) to
the desired output $\outGen_i$ of same agent, i. e.,
$\openLoopPart(s)=\outGen_i(s) / \inpGen_i(s)$.
\begin{thm}
	The transfer function $\tranFunGenInpState{\contNode}{\obsvNode}(s)$ from the
	input of the controlling agent $\inpGen_\contNode$ to the output
	$\outGen_\obsvNode$ of the observing agent is given as
	\begin{equation}
		\tranFunGenInpState{\contNode}{\obsvNode}(s) =
		\frac{\outGen_\obsvNode(s)}{\inpGen_{\contNode}(s)}=
		\openLoopPart(s)\tranFunFeedbackPartCo(s).
	\end{equation}
	\label{thm:genInputState}
\end{thm}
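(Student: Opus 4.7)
My plan is to prove the factorization by modelling the open loop of each agent as a two-input two-output system with network-interface variables $(\totalInp_i, \pos_i)$ and exogenous variables $(\inpGen_i, \outGen_i)$, giving $\pos_i = \openLoop(s)\,\totalInp_i + H_{wy}(s)\,\inpGen_i$ and $\outGen_i = H_{uz}(s)\,\totalInp_i + \openLoopPart(s)\,\inpGen_i$, where $H_{wy}$ is the open-loop transfer from $\inpGen_i$ to $\pos_i$ and $H_{uz}$ is the open-loop transfer from $\totalInp_i$ to $\outGen_i$ within a single agent. A crucial identity for the proof is $H_{uz}(s)H_{wy}(s) = \openLoop(s)\openLoopPart(s)$, which follows from the series structure of the agent's open loop: parametrizing the chain as $\totalInp \to M_a \to \oplus \inpGen \to \openLoopPart \to \outGen \to M_b \to \pos$ gives $\openLoop = M_a \openLoopPart M_b$, $H_{uz} = M_a \openLoopPart$ and $H_{wy} = \openLoopPart M_b$, hence $H_{uz}H_{wy} = \openLoopPart \cdot \openLoop$.

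The remainder of the argument reduces to the already-established factorization $\tranFunCo(s) = \openLoop(s)\tranFunFeedbackPartCo(s)$ of (\ref{eq:tranFunFeedback}). First, from the network's perspective the exogenous input $\inpGen_\contNode$ is equivalent to a virtual standard reference $\inp'_\contNode = (H_{wy}/\openLoop)\,\inpGen_\contNode$ applied at the controller input of agent $\contNode$: under this substitution $\pos_\contNode = \openLoop(\totalInp_\contNode + \inp'_\contNode)$, while the remaining network equations of (\ref{eq:overallSyst}) are unchanged, so every $\pos_i$ and in particular $\pos_\obsvNode$ takes the same value. Invoking (\ref{eq:tranFunFeedback}) with this equivalent input yields $\pos_\obsvNode(s) = \openLoop(s)\,\tranFunFeedbackPartCo(s)\,\inp'_\contNode(s) = H_{wy}(s)\,\tranFunFeedbackPartCo(s)\,\inpGen_\contNode(s)$. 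Second, at agent $\obsvNode$ one solves $\pos_\obsvNode = \openLoop\,\totalInp_\obsvNode + H_{wy}\inpGen_\obsvNode$ for $\totalInp_\obsvNode$ and plugs into $\outGen_\obsvNode = H_{uz}\,\totalInp_\obsvNode + \openLoopPart\,\inpGen_\obsvNode$, taking $\inpGen_\obsvNode$ to be $\inpGen_\contNode$ if $\obsvNode = \contNode$ and zero otherwise. The direct $\inpGen_\contNode$ contributions then combine via the series identity to cancel, leaving $\outGen_\obsvNode = (H_{uz}/\openLoop)\,\pos_\obsvNode = (H_{uz}H_{wy}/\openLoop)\,\tranFunFeedbackPartCo\,\inpGen_\contNode = \openLoopPart(s)\,\tranFunFeedbackPartCo(s)\,\inpGen_\contNode$, as claimed.

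The main obstacle I anticipate is justifying the series identity $H_{uz}H_{wy} = \openLoop\,\openLoopPart$: the derivation implicitly requires $\inpGen$ to be injected no later than $\outGen$ is extracted in the agent's open-loop signal chain, which is natural for standard references/disturbances paired with state-like outputs but is not universal. A secondary bookkeeping point is the collocated case $\contNode = \obsvNode$, where a direct $\openLoopPart\,\inpGen_\contNode$ term is present at agent $\obsvNode$; the cancellation described above is precisely what makes the final formula uniform in $\contNode$ and $\obsvNode$.
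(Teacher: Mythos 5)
Your proof is correct and follows essentially the same route as the paper: both arguments factor the problem into an input-side conversion at node $\contNode$ and an output-side conversion at node $\obsvNode$, linked by the series identity (your $H_{uz}H_{wy}=\openLoop\,\openLoopPart$ is exactly the paper's $\openLoop_1\openLoop_2/\openLoop=\openLoopPart$ with $\openLoop_1=\pos_i/\inpGen_i$ and $\openLoop_2=\outGen_i/\inp_i$), and both reuse the already-established factorization $\tranFunCo=\openLoop\,\tranFunFeedbackPartCo$. Your virtual-reference substitution $\inp'_\contNode=(H_{wy}/\openLoop)\inpGen_\contNode$ is a slightly cleaner way to obtain $\pos_\obsvNode=H_{wy}\tranFunFeedbackPartCo\inpGen_\contNode$ than the paper's ratio $\tranFunFeedbackPartCo/\tranFunFeedbackPart_{\contNode\contNode}$, and you correctly flag the implicit ordering assumption in the agent's signal chain that the paper also relies on without comment.
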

\begin{proof}
	Consider first that the controlling and the observing nodes are collocated
	($\contNode=\obsvNode$).
	Then by changing the input from $\inp_\contNode$ to $\inpGen_\contNode$ and
	the output from $\pos_\obsvNode$ to $\outGen_\obsvNode$ we just change the
	direct branch of the transfer function
	$\tranFunFromTo{\contNode}{\contNode}(s)$.
	The direct branch is then $\openLoopPart(s)$ instead of $\openLoop(s)$. The
	network (feedback) part $\tranFunFeedbackPart_{\contNode \contNode}$ in
	(\ref{eq:networkPart}) remains unchanged.
	That is,
	\begin{equation}
		\frac{\outGen_\contNode(s)}{\inpGen_\contNode(s)}=\openLoopPart(s)\tranFunFeedbackPart_{\contNode
		\contNode}(s). \label{eq:tranFunGenCollocated}
	\end{equation}
	
	Consider now that $\contNode$ and $\obsvNode$ are not collocated. Define two
	transfer functions of a single agent:
	\begin{equation}
		\openLoop_1(s)=\frac{\pos_i(s)}{\inpGen_i(s)}, \quad \openLoop_2(s) =
		\frac{\outGen_i(s)}{\inp_i(s)}. 
	\end{equation} 
	Note that $\frac{\openLoop_1(s)\openLoop_2(s)}{\openLoop(s)} =
	\frac{[\pos_i(s)/\inpGen_i(s)]\,[\outGen_i(s)/\inp_i(s)]}{\pos_i(s)/\inp_i(s)}=\openLoopPart(s)$.
	
	The transfer function from $\pos_\contNode(s)$ to
	$\pos_\obsvNode(s)$ using the input $\inp_\contNode$ is
	\begin{equation}
		\frac{\pos_\obsvNode(s)}{\pos_\contNode(s)} = \frac{\inp_\contNode(s)
		\openLoop(s) \tranFunFeedbackPartCo(s)}{\inp_\contNode(s)
		\openLoop(s) \tranFunFeedbackPart_{\contNode \contNode}(s)} =
		\frac{\tranFunFeedbackPartCo(s)}{\tranFunFeedbackPart_{\contNode
		\contNode}(s)}. \label{eq:tranFunOutputOutput}
	\end{equation}
	From (\ref{eq:tranFunGenCollocated}) we get
	$\pos_\contNode(s)=\openLoop_1(s)\tranFunFeedbackPart_{\contNode \contNode}
	\inpGen_\contNode(s)$. Plugging this to (\ref{eq:tranFunOutputOutput}) gives
	\begin{equation}
		\pos_\obsvNode(s) =  \frac{\tranFunFeedbackPartCo(s)}{\tranFunFeedbackPart_{\contNode
		\contNode}(s)} \openLoop_1(s)\tranFunFeedbackPart_{\contNode \contNode}(s)
	\inpGen_\contNode(s) = \openLoop_1(s) \tranFunFeedbackPartCo(s)
	\inpGen_\contNode(s).
	\label{eq:tranFunGenInpToOutput}
	\end{equation}
	
	Similarly, the transfer function from $\outGen_\obsvNode(s)$
	to $\pos_\obsvNode(s)$ is
	\begin{equation}
		\frac{\pos_\obsvNode(s)}{\outGen_\obsvNode(s)} = \frac{\inp_\obsvNode(s)
		\openLoop(s) \tranFunFeedbackPart_{\obsvNode \obsvNode}(s)}{\inp_\obsvNode(s)
		\openLoop_2(s) \tranFunFeedbackPart_{\obsvNode \obsvNode}(s)} =
		\frac{\openLoop(s)}{\openLoop_2(s)}, \label{eq:tranFunOutputGenOutput}
	\end{equation}
	therefore $\pos_\obsvNode(s) = \openLoop(s)/\openLoop_2(s)
	\outGen_{\obsvNode}(s)$. Plugging this to (\ref{eq:tranFunGenInpToOutput})
	and separating $\outGen_\obsvNode(s)$ yields 
	\begin{equation}
		\outGen_\obsvNode(s) = \frac{\openLoop_1(s)
		\openLoop_2(s)}{\openLoop(s)}\tranFunFeedbackPartCo(s)\inpGen_\contNode(s) =
		\openLoopPart(s) \tranFunFeedbackPartCo(s)\inpGen_\contNode(s).
	\end{equation}
	The transfer function  $\tranFunGenInpState{\contNode}{\obsvNode}(s)$ follows.
\end{proof}

The general structure is shown in Fig. \ref{fig:twoPartsOfTF}. It follows that
each transfer function in the network system is given by two parts:
\begin{enumerate}
  \item the network part $\tranFunFeedbackPartCo(s)$, which is the same
for all transfer functions with the same $\contNode$ and $\obsvNode$ nodes
and is given by the interconnection,
	\item the open loop part $\openLoopPart(s)$, which depends on the inputs and
	outputs of interest.
\end{enumerate}

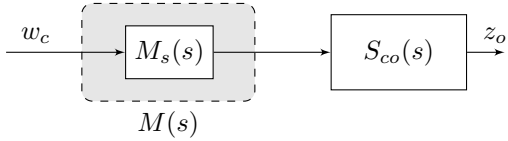
\begin{figure}[t]
\centering
	\begin{tikzpicture}[auto, >=latex']
	\node[placeholder] (inp) {};
	\node[block,minimum width=2.3cm, minimum
	height=1.3cm, fill=black!10,draw,dashed,rounded corners] (ol) [right=of inp,
	label=below:{$\openLoop(s)$}] {}; \node[block]  (olred) [right=of inp,
	right=4.5em] {$\openLoopPart(s)$}; \draw [->] (inp) to node[name=u, near start]
	{$\inpGen_\contNode$}(olred); \node[block,minimum width=1.8cm, minimum
	height=1.0cm,] (fb) [right=of ol] {$\tranFunFeedbackPartCo(s)$}
		edge [<-] (olred);
	\node[placeholder] (out) [right=of fb] {};
	\draw [->] (fb) to node[name=y, near end] {$\outGen_\obsvNode$}(out);
\end{tikzpicture}
	\caption{Two parts of transfer functions between $\contNode$ and $\obsvNode$
	for general input and output.}
	\label{fig:twoPartsOfTF}
\end{figure}

\subsection{Disturbances}
First we analyze an input disturbance $\inDist_\contNode$, acting at the input
of the plant.
The modified open-loop transfer function is
$\openLoopPart(s)=\vehTf(s)$. Then the transfer function is
\begin{equation}
	\tranFunDistIn{\contNode}{\obsvNode}(s)=\frac{\pos_\obsvNode(s)}{\inDist_\contNode(s)}
	= \vehTf(s) \tranFunFeedbackPartCo(s). 
\end{equation} 

It is clear that $\tranFunCo(s)$  and
$\tranFunDistIn{\contNode}{\obsvNode}(s)$ differ only in the presence of
transfer function of the controller and $\tranFunCo(s) = \contTf(s) \tranFunDistIn{\contNode}{\obsvNode}(s)$. 

The output disturbance $\outDist$ changes the output of the plant of the $j$th
agent as
%\begin{equation}
	$\pos_j = \bar{\pos}_j+\outDist_j,$
%\end{equation}
where $\bar{\pos}_i$ is the output of the agent without disturbance. In
this case $\openLoopPart(s)=1$, so the transfer function for output disturbance
is
\begin{equation}
	\tranFunDistOutCo(s) = \frac{\pos_\obsvNode(s)}{\outDist_\contNode(s)} =
	\tranFunFeedbackPartCo(s).
\end{equation}

% For
% observability or controllability properties see for instance \cite{Rahmani2009}
% or \cite{Notarstefano2013}.
 
\section{Relations to single-integrator case}
In this section we provide some results for the single-integrator case. They
easily generalize to higher-order dynamics, because of the fact that
$\spatEigZ_i$, the gain in the closed loop in (\ref{eq:tranFunProductForm}), is
the same as the zero in the single-integrator dynamics. Let us denote
$\redLapl_{i:j}^k$ as a matrix which is obtained from $\lapl$ by deleting the
rows and columns corresponding to the vertices on the $k$th path from vertex $i$
to $j$.

 The simplest case is when the controller node and
 observer nodes are collocated, i.e.
$\contNode = \obsvNode$. Then, as shown in \cite{Briegel2011, Torres2014,
Herman2014a}, the zeros are given as eigenvalues of $\redLapl^1_{(\contNode:\contNode)}$ and the
numerator polynomial is
\begin{equation}
	 \numPol(s) = \det (s \idMat +
	 \redLapl^1_{(\contNode:\contNode)}). \label{eq:contObsvcollocated} 
\end{equation}
The spectrum of this reduced Laplacian (also known as
a grounded Laplacian) is discussed in \cite{Pirani2014}.

The next theorem was independently discovered in \cite{Torres2013}
using purely algebraic techniques. Here we provide a graph-theoretic proof.
\begin{thm}
	If there is only one path between the controlling node and the observing node,
	then 
	\begin{IEEEeqnarray}{rCl}
		\charPolNum(s) = \pathWeightCo \det
		\left(s \idMat + \redLapl^1_{\contNode:\obsvNode}
		\right).
	\end{IEEEeqnarray}	
	The roots $-\spatEigZ_i$ of $\charPolNum(s)$ are
	the eigenvalues of $-\redLapl^1_{\contNode:\obsvNode}$.
	\label{conj:onePath}
\end{thm}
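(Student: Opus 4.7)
My plan is to prove the identity coefficient-by-coefficient in $s$ by interpreting both sides as weighted sums over spanning out-forests of $\graph$, and then constructing a bijection between the two families of forests whose weight discrepancy is exactly the factor $\pathWeightCo$. Let $P = \{\contNode = v_0, v_1, \dots, v_\distanceCO = \obsvNode\}$ denote the vertex set of the unique $\contNode$-to-$\obsvNode$ path, so $|P| = \distanceCO + 1$ and $\redLapl^1_{\contNode:\obsvNode}$ is the principal $(\numVeh - \distanceCO - 1)\times(\numVeh - \distanceCO - 1)$ submatrix $\lapl_{[P^c,P^c]}$.

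On the left-hand side, Lemma~\ref{lem:coefForestWeights} together with the reindexing $k = \numVeh - i - 1 - \distanceCO$ gives $\charPolNum(s) = \sum_{k\geq 0} \pathWeight(\forestSet_{k+\distanceCO}^{\contNode \rightarrow \obsvNode})\, s^{\numVeh - \distanceCO - 1 - k}$. On the right-hand side, I expand $\det(s\idMat + \redLapl^1_{\contNode:\obsvNode}) = \sum_k \sigma_k\, s^{(\numVeh - \distanceCO - 1) - k}$, where $\sigma_k$ is the sum of $k \times k$ principal minors of $\redLapl^1_{\contNode:\obsvNode}$. Any such minor indexed by $S \subseteq P^c$ equals the principal minor $\det(\lapl_{[S,S]})$ of the full Laplacian; by the extension of the directed matrix-tree theorem underlying \eqref{eq:weightOfForestCofactor} and \eqref{eq:cofactorAsForests} (see \cite{Chebotarev2002}), this determinant is the weight of all spanning out-forests of $\graph$ whose root set is exactly $P \cup (P^c \setminus S)$. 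Summing over all $S \subseteq P^c$ of size $k$ then yields $\sigma_k = \pathWeight(\mathcal{F}_k^P)$, the total weight of spanning out-forests of $\graph$ with $k$ arcs in which every vertex of $P$ is a root.

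The core step is a weight-respecting bijection $\Phi : \forestSet_{k+\distanceCO}^{\contNode \rightarrow \obsvNode} \to \mathcal{F}_k^P$. Given $F$ in the source, the $\contNode$-rooted tree of $F$ contains $\obsvNode$ and therefore some directed $\contNode$-to-$\obsvNode$ path; by the uniqueness hypothesis this path must be $P$, so all $\distanceCO$ arcs of $P$ lie in $F$. Removing these arcs strips each $v_i$ ($i \geq 1$) of its sole incoming arc in $F$ and produces a member of $\mathcal{F}_k^P$. The inverse simply adjoins the arcs of $P$ to any $F' \in \mathcal{F}_k^P$; this creates no cycle (each $v_i$ had in-degree $0$ in $F'$) and fuses the subtrees rooted at $v_1,\dots,v_\distanceCO$ in $F'$ into the tree rooted at $\contNode$. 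Multiplicativity of arc weights gives $\pathWeight(F) = \pathWeightCo \cdot \pathWeight(\Phi(F))$, so $\pathWeight(\forestSet_{k+\distanceCO}^{\contNode \rightarrow \obsvNode}) = \pathWeightCo\, \sigma_k$ for every $k$, and matching coefficients of $s$ closes the argument. The only delicate ingredient is the forward direction of $\Phi$: the uniqueness hypothesis is precisely what forces the tree's internal $\contNode$-to-$\obsvNode$ path to coincide with the designated path $P$, so that removing the arcs of $P$ never tries to delete an arc absent from $F$---without uniqueness, forests routing through alternative paths break the map and additional terms appear in the numerator.
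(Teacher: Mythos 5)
Your proof is correct, but it takes a genuinely different route from the paper's. The paper proceeds by graph surgery: it deletes the arcs converging to the unique path, factors out $\pathWeightCo$, contracts the path vertices into a single new vertex $\contNode'$ (reducing to the collocated case $\contNode=\obsvNode$), and then invokes the collocated formula (\ref{eq:contObsvcollocated}) to eliminate $\contNode'$ as well. You instead match coefficients of $s$ directly: you expand $\det(s\idMat+\redLapl^1_{\contNode:\obsvNode})$ into sums of principal minors, interpret each principal minor $\det(\lapl_{[S,S]})$ via the all-minors matrix-forest theorem as the weight of spanning out-forests rooted at the complement of $S$, and exhibit an explicit weight-scaling bijection (delete/adjoin the $\distanceCO$ path arcs) between $\forestSet_{k+\distanceCO}^{\contNode\rightarrow\obsvNode}$ and the forests in which every path vertex is a root. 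Your bijection makes fully explicit two points the paper treats informally: that every non-root $v_i$ on the path has $v_{i-1}$ as its unique parent (so the deletion is well defined), and that the uniqueness hypothesis is used precisely to force the tree's internal $\contNode$-to-$\obsvNode$ path to coincide with $P$; the paper's steps 3--4 (vertex identification and re-reading the coefficients of the contracted graph) are stated more loosely. The trade-off is that you invoke a slightly stronger external ingredient --- the all-principal-minors form of the matrix-forest theorem --- whereas the paper only reuses the cofactor identities (\ref{eq:numPolAsAdjugateMat})--(\ref{eq:cofactorAsForests}) and the previously established collocated case; both are standard in \cite{Chebotarev2002}, so this is a matter of taste rather than a gap. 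Two cosmetic remarks: your inverse map's claim of acyclicity is justified most cleanly by tracing unique in-arcs back to $\contNode$, which has none; and note that $\pathWeightCo$ as defined in Theorem \ref{lem:totTranFun} is the sum of weights of all shortest paths, which under the one-path hypothesis coincides with the weight of the unique path, as you use.
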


\begin{proof}
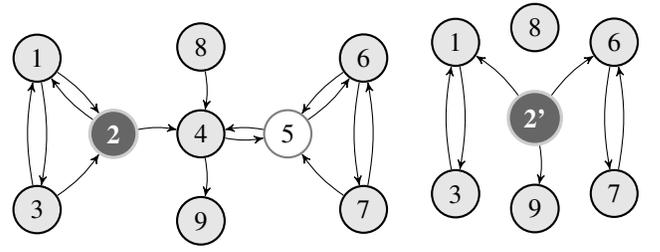
\begin{figure}[t]
\centering
\begin{subfigure}{0.30\textwidth}
	\begin{tikzpicture}
	\node[normalNode]  (n1) {1};
	\node[placeholder] (p1) [below=of n1] {};
	\node[controllingNode] (n2) [right=of p1]{2}
		edge [<-, bend right] (n1)
		edge [->, bend left] (n1);
	\node[normalNode] (n3) [below=of p1]{3}
		edge [->, bend left] (n1)
		edge [<-, bend right] (n1)
		edge [->, bend right] (n2);
	\node[normalNode] (n4) [right=of n2]{4}
		edge [<-, bend right] (n2);
	\node[observingNode] (n5) [right=of n4] {5}
		edge [->, bend right] (n4)
		edge [<-, bend left] (n4);
	\node[placeholder] (p2) [right=of n5] {};
	\node[normalNode] (n6) [above=of p2] {6}
		edge [->, bend right] (n5)
		edge [<-, bend left] (n5);  
	\node[normalNode] (n7) [below=of p2] {7}
		edge [->, bend right] (n6)
		edge [<-, bend left] (n6)
		edge [->, bend left] (n5);
	\node[normalNode] (n8) [above=of n4] {8}
		edge [->, bend left] (n4);
	\node[normalNode] (n9) [below=of n4] {9}
		edge [<-, bend right] (n4);
\end{tikzpicture}
	\caption{Before reduction}
	\label{fig:beforeRed}
\end{subfigure}
\begin{subfigure}{0.15\textwidth}
	\begin{tikzpicture}
	\node[normalNode]  (n1) {1};
	\node[placeholder] (p1) [below=of n1] {};
	\node[controllingNode] (n2) [right=of p1]{2'}
		edge [->, bend right] (n1);
	\node[normalNode] (n3) [below=of p1]{3}
		edge [->, bend left] (n1)
		edge [<-, bend right] (n1);
	\node[placeholder] (p2) [right=of n2] {};
	\node[normalNode] (n6) [above=of p2] {6}
		edge [<-, bend right] (n2); 
	\node[normalNode] (n7) [below=of p2] {7}
		edge [->, bend right] (n6)
		edge [<-, bend left] (n6);
	\node[normalNode] (n8) [above=of n2] {8};
	\node[normalNode] (n9) [below=of n2] {9}
		edge [<-, bend right] (n2);
\end{tikzpicture}
	\caption{After reduction}
	\label{fig:afterRed}
\end{subfigure}
\caption{Graph reduction without changing $\charPolNum(s)$. The controlling node
is the node $2$, observing is $5$.}
	\label{fig:graphReduction} 
\end{figure}
% Let us write the polynomial as $\charPolNum(s)=\sum_{k=0}^{\numVeh}
% \charPolNumCoef_k s^{\numVeh-k}$. 
Recall that by (\ref{eq:numPolAsAdjugateMat})
$\charPolNum(s)$ equals $(\obsvNode, \contNode)$ cofactor of $s
\idMat+\lapl$.
By Lemma \ref{lem:coefForestWeights} coefficients $\charPolNumCoefSingle_i$
of $\numPol(s)$ are the weights of the set of all spanning diverging forests
with the root $\contNode$ and containing $\obsvNode$ having $\numVeh-i-1$ arcs,
therefore $\charPolNumCoefSingle_i = 0$ for $i \geq \numVeh-\distanceCO$. In addition, the path from $\contNode$ to $\obsvNode$ must be present in
every spanning forest with more than $\distanceCO$ arcs. 

The proof will be shown in several steps of modifying the original graph
$\graph$ and constructing a new one $\graph^{'}$ with the preserved polynomial
$\charPolNum(s)$.

\begin{enumerate}
  \item Remove all the arcs converging to the path
  $\path{\contNode}{\obsvNode}$ from $\contNode$ to $\obsvNode$. They cannot be
  part of any forest diverging from $\contNode$ and containing $\obsvNode$.
  \item Since by the assumption there is only one path between $\contNode$ and
  $\obsvNode$, the path $\path{\contNode}{\obsvNode}$ is
  present in each forest in Lemma \ref{lem:coefForestWeights} and the weight
  $\pathWeightCo=\pathWeight(\path{\contNode}{\obsvNode})$ of the path must be
  present in all coefficients $\charPolNumCoefSingle_i$. We can write
  \begin{IEEEeqnarray}{rCl} 
	\charPolNum(s) &=& \pathWeightCo
	\bar{\charPolNum}(s) = \pathWeightCo\big(s^{\numVeh-1-\distanceCO}  
	\label{eq:factoredNumCoef}
	\\
	&+&
	\charPolNumCoefMod_{\numVeh-2-\distanceCO} s^{\numVeh-2-\distanceCO} +
	\ldots+ \charPolNumCoefMod_{0}\big). \nonumber
\end{IEEEeqnarray}
This factoring acts as removing the arcs on the path from the graph.
	\item Now we want to find a matrix of which $\bar{\charPolNum}(s)$ is a
	characteristic polynomial. By factoring the weight of the path, we identified (created one from
many) the vertices on the path into only one new vertex $\contNode '$. All arcs
connected to the path are now connected to the new vertex $\contNode '$.
The controlling and observing nodes were collocated. Denote such a new graph as
$\graph^{'}$ with the number of vertices $\numVeh(\graph ')=\numVeh-\distanceCO$. The
process of such graph reduction is illustrated in Fig.
\ref{fig:graphReduction}.
\item The coefficients $\charPolNumCoefMod_i$ in
(\ref{eq:factoredNumCoef}) are the weights of the set of all spanning forests in
the reduced graph $\graph^{'}$, diverging from $\contNode '$ with
$\numVeh(\graph ')-i-1$ arcs.
Then, by (\ref{eq:numPolAsAdjugateMat}-\ref{eq:cofactorAsForests}), the
polynomial $\bar{\charPolNum}(s)$ equals the $(\contNode ',\contNode ')$ cofactor of
$(s\idMat_{\numVeh-\distanceCO} + \redLapl^1_{\contNode:\obsvNode-1})$.
Since the observing and controlling nodes are collocated in the modified
graph $\graph '$, we can use (\ref{eq:contObsvcollocated}) to remove also the
node $\contNode '$ from the graph. 
\end{enumerate} 

In step 3 we deleted all nodes on the path except for the node
$\contNode$. In the last step we were also able to eliminate the
controlling node, so the polynomial $\charPolNum(s)$ can be calculated as 
\begin{equation}
	\charPolNum(s) = \pathWeightCo
	\det (s \idMat + \redLapl^1_{\contNode:\obsvNode}). \qedhere 
	\label{eq:removedVertices}
\end{equation}
\end{proof}
The theorem allows to find $\spatEigZ_i$ directly from the
submatrix of the Laplacian.
The real part of $\spatEigZ_i$ is positive, since the matrix
$\redLapl^1_{\contNode: \obsvNode}$ is still an M-Matrix \cite{Horn1999}. 	In
addition, if $\lapl$ is a symmetric matrix and the conditions in Theorem
\ref{conj:onePath} hold, then $\spatEigZ_i$ interlace with $\spatEig_i$ due to
the Cauchy interlacing
	theorem \cite{Horn1996}.
	
The second theorem is an extension of the previous one.
\begin{thm}
	Let $p(\graph)_{\contNode, \obsvNode}$ be the number of paths from
	the node $\contNode$ to the node $\obsvNode$. Then the numerator polynomial
	$\charPolNum(s)$ in (\ref{eq:tranFunSingle}) is given as a sum of
	characteristic polynomials of $\redLapl_{\contNode:\obsvNode}^i$ corresponding
	to the individual paths $\path{\contNode}{\obsvNode}^i$, i.
	e.
	\begin{equation}
		\charPolNum(s) = \sum_{i=1}^{p(\graph)_{\contNode, \obsvNode}}
		\pathWeight(\path{\contNode}{\obsvNode}^i) \det (s \idMat +
		\redLapl^i_{\contNode:\obsvNode}),
		\label{eq:charPolMultPaths}
	\end{equation}
	\label{conj:sumPath}
\end{thm}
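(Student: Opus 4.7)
The plan is to extend the reduction argument of Theorem~\ref{conj:onePath} by partitioning the set of relevant spanning out-forests according to which $\contNode$-to-$\obsvNode$ path they contain. By Lemma~\ref{lem:coefForestWeights} the coefficient of $s^i$ in $\charPolNum(s)$ equals $\pathWeight(\forestSet_{\numVeh-i-1}^{\contNode\rightarrow\obsvNode})$, so $\charPolNum(s)$ is the generating polynomial (in $s$) of spanning diverging out-forests in $\graph$ whose tree rooted at $\contNode$ contains $\obsvNode$. In any such forest that tree is, in particular, a directed tree, so it contains a \emph{unique} directed $\contNode$-to-$\obsvNode$ walk; this walk uses each arc and vertex at most once, and is therefore one of the $p(\graph)_{\contNode,\obsvNode}$ paths $\path{\contNode}{\obsvNode}^i$ of $\graph$. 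This gives a well-defined partition of $\forestSet^{\contNode\rightarrow\obsvNode}_k$ (for every $k$) by the index $i$ of the contained path.

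Next, I would fix one path $\path{\contNode}{\obsvNode}^i$ and compute the generating polynomial of forests in its block of the partition by replaying the four-step reduction in the proof of Theorem~\ref{conj:onePath} verbatim. Because each forest in this block already uses every arc of $\path{\contNode}{\obsvNode}^i$ and because each interior vertex of the path has in-degree one in the out-forest (its incoming arc is already on the path), no other arc of $\graph$ converging to an interior path-vertex can appear. Factoring out $\pathWeight(\path{\contNode}{\obsvNode}^i)$ from every coefficient corresponds to contracting the path to a single vertex $\contNode'_i$; the forests in the contracted graph are spanning out-forests with $\contNode$ and $\obsvNode$ collocated at $\contNode'_i$. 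Applying the collocated identity (\ref{eq:contObsvcollocated}) to $\contNode'_i$ eliminates this vertex and gives exactly $\det\bigl(s\idMat+\redLapl^{i}_{\contNode:\obsvNode}\bigr)$, since removing $\contNode'_i$ from the contracted graph is the same as removing all vertices of $\path{\contNode}{\obsvNode}^i$ from $\graph$. Thus the contribution of the $i$th partition block to $\charPolNum(s)$ is $\pathWeight(\path{\contNode}{\obsvNode}^i)\det(s\idMat+\redLapl^{i}_{\contNode:\obsvNode})$.

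Summing over $i=1,\dots,p(\graph)_{\contNode,\obsvNode}$ yields (\ref{eq:charPolMultPaths}).

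The main obstacle I expect is verifying that the partition is genuinely disjoint when two distinct paths $\path{\contNode}{\obsvNode}^i$ and $\path{\contNode}{\obsvNode}^j$ share interior vertices. The point is that, although the vertex sets overlap, any given forest in $\forestSet^{\contNode\rightarrow\obsvNode}_k$ singles out exactly one of the paths through its unique tree-walk from $\contNode$ to $\obsvNode$, so it is assigned to a single block regardless of these overlaps. A secondary subtlety is that the reduction in Theorem~\ref{conj:onePath} must hold coefficient-by-coefficient in $s$ for each path; this follows because the contraction for path $i$ preserves the arc-count bijection between forests of the original graph containing $\path{\contNode}{\obsvNode}^i$ and spanning out-forests of the contracted graph, uniformly across all coefficients.
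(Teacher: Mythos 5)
Your proposal is correct and follows essentially the same route as the paper: partition the spanning out-forests counted by Lemma \ref{lem:coefForestWeights} according to the (unique) $\contNode$-to-$\obsvNode$ path contained in the tree rooted at $\contNode$, apply the contraction argument of Theorem \ref{conj:onePath} to each block, and sum the resulting weighted characteristic polynomials. You are in fact somewhat more careful than the paper's own proof in justifying that the partition is well defined and that the reduction applies block by block.
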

\begin{proof}
Since there are $p(\graph)_{\contNode, \obsvNode}$ paths between the nodes,
there are also $p(\graph)_{\contNode, \obsvNode}$ basic trees diverging from
$\contNode$ and containing $\obsvNode$ (they can have different lengths). For
each of the paths Theorem \ref{conj:onePath} must hold. Let us denote the weight
of spanning forests with $\numVeh-\distanceCO^k-1-i$ arcs corresponding to the
path $k$ with length $\distanceCO^k$ as $\charPolNumCoefSingle_i^k$.
Since the paths are distinct, also the spanning forests corresponding to the
paths will be distinct and the total weight of the set $\forestSet_{k}^{i
\rightarrow j}$ is the sum of the weights of the individual trees. Then each
coefficient in $\charPolNum(s)$ is a sum of the weights of the trees
corresponding to each path, i. e.
	\begin{equation}
		\charPolNumCoefSingle_i =  \sum_{k=1}^{p(\graph)_{\contNode,
	\obsvNode}} \charPolNumCoefSingle_i^k. \label{eq:sumOfCoefsMultPaths}
	\end{equation} 
	Equation (\ref{eq:charPolMultPaths}) then follows from
	(\ref{eq:sumOfCoefsMultPaths}) using Theorem \ref{conj:onePath}.
\end{proof} 

% \begin{pf}
% 	By Theorem \ref{conj:onePath} the polynomial $\charPolNum(s)$ is a characteristic polynomial of a principal submatrix $\redLapl^1_{(\obsvNode:\contNode)}$ of $\lapl$. Laplacian is a singular M-matrix, of which all principal submatrices are also M-matrices, see \cite{Horn1999}. then $\redLapl^1_{(\obsvNode:\contNode)}$ is also an M-matrix and thus has all eigenvalues stable.
% \end{pf}

\subsection{Multiple controlling nodes}
Instead of one controlling node $\contNode$ we can have a set
$\contNodeSet=\{\contNode_1, \contNode_2, \ldots, \contNode_{\contNodeNum}\}$ of
$\contNodeNum$ controlling nodes to which the same signal is fed (for instance,
the leader connected to more agents).
Then the numerator polynomial is simply given as a sum of polynomials for individual controlling nodes.
\begin{lem}
	The polynomial $\charPolNum(s)$ for the set of controlling nodes $\contNodeSet$ is equal to
		$\charPolNum(s) = \sum_{i=1}^{\contNodeNum} \charPolNum_i(s)$,
	where $\charPolNum_i(s)$ is the polynomial when the input is fed only to the
	$i$th agent.
	\label{lem:multipleContNode}
\end{lem}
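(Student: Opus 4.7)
The plan is to exploit linearity. When the same exogenous signal $\inp_\contNode$ is injected at every node in $\contNodeSet$, the stacked input vector in (\ref{eq:formationInput}) becomes $\inp(s) = \bigl(\sum_{i=1}^{\contNodeNum} \canonVect_{\contNode_i}\bigr)\inp_\contNode(s)$ instead of $\canonVect_\contNode \inp_\contNode(s)$. Since the system (\ref{eq:overallSyst}) is linear and the denominator $\denPol(s) = \det(s\idMat + \lapl)$ does not depend on the location of the input, it suffices to examine how the numerator polynomial depends on the input-selection vector.

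First I would recall the cofactor formula (\ref{eq:numPolAsAdjugateMat}): for a single controlling node $\contNode_i$, the numerator polynomial is $\charPolNum_i(s) = \canonVect_\obsvNode^T \,\text{adj}(s\idMat + \lapl)\, \canonVect_{\contNode_i}$. Then for the set $\contNodeSet$ the same derivation that led to (\ref{eq:outputTranFunObsv}) yields
\begin{equation*}
\charPolNum(s) = \canonVect_\obsvNode^T \,\text{adj}(s\idMat + \lapl)\, \sum_{i=1}^{\contNodeNum} \canonVect_{\contNode_i}.
\end{equation*}
Pulling the sum outside the bilinear expression gives $\charPolNum(s) = \sum_{i=1}^{\contNodeNum} \canonVect_\obsvNode^T \,\text{adj}(s\idMat + \lapl)\, \canonVect_{\contNode_i} = \sum_{i=1}^{\contNodeNum} \charPolNum_i(s)$, which is the claim.

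Alternatively, one can give a graph-theoretic version using Lemma \ref{lem:coefForestWeights}: the coefficient at $s^j$ in $\charPolNum(s)$ is the weight of the set of spanning diverging out-forests with $\numVeh-j-1$ arcs whose root set intersects $\contNodeSet$ at a tree containing $\obsvNode$. Since each such forest has a unique root $\contNode_i \in \contNodeSet$ for the tree containing $\obsvNode$, this set partitions according to the root, and the weights add up to $\sum_{i=1}^{\contNodeNum}\pathWeight(\forestSet_{\numVeh-j-1}^{\contNode_i\rightarrow\obsvNode})$, matching the coefficients of $\sum_i \charPolNum_i(s)$.

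No real obstacle is expected; the only subtlety is that the denominator polynomial is unchanged, so summing the $\tranFunCo(s)$ transfer functions with common denominator amounts to summing the numerators. I would present the linear-algebraic proof as the main argument and, if space permits, mention the combinatorial interpretation as a remark.
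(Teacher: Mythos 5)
Your proof is correct, but your primary argument takes a genuinely different route from the paper. The paper disposes of this lemma in one line by pointing back to the ``mutually exclusive forests'' argument of Theorem \ref{conj:sumPath}: for distinct roots $\contNode_i \in \contNodeSet$ the sets $\forestSet_{k}^{\contNode_i \rightarrow \obsvNode}$ are pairwise disjoint (the tree containing $\obsvNode$ has a unique root), so the forest weights giving the coefficients via Lemma \ref{lem:coefForestWeights} simply add. That is exactly your ``alternative'' combinatorial remark, so you have reproduced the paper's proof as a secondary argument. Your main argument is instead purely linear-algebraic: the input vector becomes $\sum_{i=1}^{\contNodeNum}\canonVect_{\contNode_i}\inp_\contNode$, the cofactor formula (\ref{eq:numPolAsAdjugateMat}) is linear in the input-selection vector, and the common denominator $\det(s\idMat+\lapl)$ is unaffected, so the numerators sum. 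This is cleaner and arguably more robust --- it needs no combinatorics at all and makes transparent why the denominator is shared --- while the paper's forest argument buys the extra structural insight that the summands never overlap as sets of forests, which is what makes the coefficient-level identity in Lemma \ref{lem:coefForestWeights} extend verbatim to root sets. One small point of care, which you implicitly handle correctly: the lemma's $\charPolNum(s)$ must be understood as the cofactor polynomial of (\ref{eq:numPolAsAdjugateMat}) prior to any pole--zero cancellation; with that convention your bilinearity step is immediate and the proof is complete.
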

\begin{proof}
	The proof can be obtained using the same arguments of mutually exclusive
	forests as in the proof for Theorem \ref{conj:sumPath}.
\end{proof}

Suppose that $\contNode_n \in \contNodeSet$ is the node 
	in $\contNodeSet$ with the shortest distance to the observing node. Then the 
	relative degree of the transfer function $\tranFunCo(s)$  between
	$\contNodeSet$ and $\obsvNode$ with agents having higher order-dynamics is
	%\begin{equation}
	$	\relOrd_{\contNode\obsvNode}  =  (\distance_{\contNode_n \obsvNode} +
		1)\relOrd.
	$ %\end{equation}
This follows since the degree of the sum of polynomials is the
degree of the polynomial of the highest degree.

% This means that for output disturbance for at least one integrator in the open
% loop, unlike in previous transfer functions, the steady-state gain of
% $\tranFunDistOut{i}{j}(0)$ is bounded. This follows from
% $\tranFunFeedbackPartCo(0)=[\prod_{i=1}^{\numVeh_n} \spatEigZ_i]/[\prod_{i=2}^N
% \spatEig_i]$.

\subsection{Minimal dimension of a controllable subspace}
\begin{table}
\centering 
\caption{Controllable subspaces for some typical
	undirected graphs with $\numVeh$ vertices.}	
	\label{tab:minContSubs}
\captionsetup{width=0.45\textwidth}
	\begin{tabular}{|c|c|c|c|}
	\hline
	Graph & $\contNode$ node & $\max_{i} d_{\contNode i}$ & Dim. of ctrb. subs. \\
	\hline 
	Star graph & central & 1 & 2 \\
	\hline
	Path graph & end node & $\numVeh-1$ & $\numVeh$ \\
	\hline
	Path graph & central node & $\numVeh/2$ & $\numVeh/2+1$ \\
	\hline
	\end{tabular}	
\end{table}
From equation (\ref{eq:tranFunProductForm}) it follows that if the
single-integrator case is uncontrollable, so are all the systems with higher
order dynamics (we use an output feedback). The following result is an extension
of \cite[Thm.
2]{Zhang2014} to directed graphs.
\begin{thm}
Let $\max_{i} d_{\contNode i}$ be the maximal distance to some of the
other nodes from the controlling node $\contNode$. Then for the dimension 
of the controllable subspace $\mathrm{rank}(\contMat)$ of single integrator
dynamics holds
$\mathrm{rank}(\contMat)\geq \max_{i} d_{\contNode i}+1$.
\label{thm:minContSubsp}
\end{thm}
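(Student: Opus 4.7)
The plan is to work directly with the controllability matrix $\contMat = [\canonVect_\contNode,\, -\lapl \canonVect_\contNode,\, (-\lapl)^2 \canonVect_\contNode, \ldots, (-\lapl)^{\numVeh-1}\canonVect_\contNode]$ of the single-integrator system $\dot{\stateVar} = -\lapl \stateVar + \canonVect_\contNode \inp_\contNode$, and exhibit $d^\ast+1$ linearly independent columns among the first $d^\ast+1$ of them, where $d^\ast = \max_i \distance_{\contNode i}$.

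The key observation is that Lemma \ref{lem:numwalks} describes precisely how the entries of $(-\lapl)^m \canonVect_\contNode$ depend on distances from $\contNode$. Indeed, the $i$th entry of $(-\lapl)^m \canonVect_\contNode$ equals $(-\lapl^m)_{i\contNode}$, which is zero whenever $m < \distance_{\contNode i}$ and is strictly positive (the weight of the set of shortest paths, as the weights are positive) whenever $m = \distance_{\contNode i}$. This ``triangular'' pattern of zeros and nonzeros is exactly what I will exploit.

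First I would pick, for each $k = 0, 1, \ldots, d^\ast$, a representative vertex $i_k$ with $\distance_{\contNode i_k} = k$; this is possible because $d^\ast$ is the maximum distance attained. Then I would form the $(d^\ast+1)\times (d^\ast+1)$ submatrix $M$ of $\contMat$ obtained by keeping the rows indexed by $i_0, i_1, \ldots, i_{d^\ast}$ and the columns $(-\lapl)^0 \canonVect_\contNode, \ldots, (-\lapl)^{d^\ast} \canonVect_\contNode$. By the lemma, the $(j,k)$ entry of $M$ vanishes whenever $k < j$, while the diagonal entry ($k = j$) equals the positive weight $\pathWeight(\forestSet_{j}^{\contNode \rightarrow i_j})$ of the shortest paths from $\contNode$ to $i_j$. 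Thus $M$ is upper triangular with a strictly positive diagonal, hence nonsingular.

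Consequently the first $d^\ast+1$ columns of $\contMat$ are linearly independent, giving $\mathrm{rank}(\contMat) \geq d^\ast + 1 = \max_i \distance_{\contNode i}+1$, which is the claim. The only delicate point is making sure the positivity of the diagonal entries is secured: it rests on the positivity of the arc weights in the definition of $\pathWeight(\cdot)$ and on the existence of at least one shortest path to each $i_k$, both of which are immediate from the graph-theoretic setup in Section~2. No further algebraic machinery beyond Lemma \ref{lem:numwalks} is required.
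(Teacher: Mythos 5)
Your proof is correct and follows essentially the same route as the paper's: both exploit the zero/nonzero pattern of Lemma \ref{lem:numwalks} along vertices at distances $0,1,\ldots,d^\ast$ from $\contNode$ (the paper takes these as the vertices of a single shortest path to the furthest node) to extract $d^\ast+1$ linearly independent columns of $\contMat$. Your explicit upper-triangular submatrix merely packages the paper's linear-independence argument slightly more formally.
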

\begin{proof}
Let us denote the furthest node from $\contNode$ as $f$ and the distance of
$f$ from $\contNode$ as $d_f = \max_{i} d_{\contNode i}$. Let $\lapl^i_\contNode$ denote the $\contNode$th column in $\lapl^i$. Let the vertices on the shortest path from
$\contNode$ to $f$ be labeled as $\vertex_0, \ldots, \vertex_{d_f}$ and the
distance of $\vertex_i$ from $\contNode$ as $\distance_i$.
By Lemma \ref{lem:numwalks} the $\vertex_i$th element in $\lapl^j_\contNode$ is
zero for all $j<d_i$ and is nonzero for $j \geq d_i$. Therefore,
$\lapl^{d_i}_\contNode$ is linearly independent of $\lapl^j_\contNode$ for $j <
d_i$ and $d_i=0,\ldots, d_f$.
Consequently, all columns $[\lapl^0_\contNode, \lapl^1_\contNode, \ldots,
\lapl^{d_f}_\contNode]$
 must be linearly independent.

The controllability criterion matrix is defined as
%\begin{equation}
$	\contMat = [\lapl^0_\contNode, \lapl^1_\contNode, \ldots,
	\lapl^{\numVeh}_\contNode].$ 
%\end{equation}
By previous
development we know that at least $\lapl^0_\contNode, \ldots
\lapl^{d_f}_\contNode$ are linearly independent, hence $\text{rank}(\mathcal{C})
\geq d_f+1$.
\end{proof}
Of course, the controllable subspace can be much greater than indicated by
this theorem and our result can be very conservative. The bound is achieved for
some graphs and controlling nodes, as shown in Table
\ref{tab:minContSubs}. Some further discussion of the tightness of the bound is
in \cite[Remark 2]{Zhang2014}.
Theorem \ref{thm:minContSubsp} gives a strong structural controllability, since
it does not depend on the weights of the arcs.
By any choice of the nonzero weights of arcs, the controllable subspace cannot
have smaller dimension than $\max_{i} d_{\contNode i}+1$. Structural controllability
is described, e.g., in \cite{Chapman2013, Clark2014}.

Surprisingly, the more distant node exists in a graph, the greater the guaranteed
dimension of the controllable subspace. On the other hand, it was shown in
\cite{Shi2014} that the transient time grows with the maximal
distance from the control node. Similarly, at least for a path graph it follows
from \cite{Fitch2013} that the external input should be applied to the agent where it minimizes the
maximal distance. This is also confirmed by the relative degree in Corollary
\ref{cor:relDegree} --- the higher the degree, the slower is the information
spread. However, in this case the node has the smallest guaranteed controllable
subspace. An optimization procedure for the tradeoff between performance and
controllability is presented in \cite{Clark2014}.

\section{Illustrative example}
Consider a directed and weighted graph with five nodes shown in Fig.
\ref{fig:graphPath} (The arcs without a weight shown
	have a weight one).
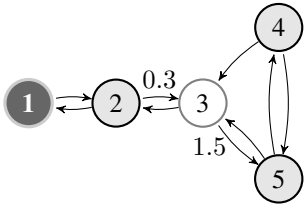
\begin{figure}  
\centering 
	\begin{tikzpicture}[->,>=stealth',shorten >=1pt,auto,node distance=3cm]
	\node[controllingNode]  (n1) {1};
	\node[normalNode] (n2) [right=of n1]{2}
		edge [<-, bend right] (n1)
		edge [->, bend left] (n1);
	\node[observingNode] (n3) [right=of n2]{3}
		edge [->, bend left] (n2)
		edge [<-, bend right] node[above] {$0.3$} (n2);
	\node[placeholder] (p2) [right=of n3] {};
	\node[normalNode] (n4) [above=of p2] {4}
		edge [->, bend right] (n3);
	\node[normalNode] (n5) [below=of p2] {5}
		edge [->, bend left]  (n4)
		edge [<-, bend right]  (n4)
		edge [->, bend right] (n3)
		edge [<-, bend left]  node[left] {$1.5$} (n3);
\end{tikzpicture}
	\caption{Directed graph used in the example. }
	\label{fig:graphPath}    
\end{figure}
The plant is $\vehTf(s)=1/s$, the controller is $\contTf(s)=(s+1)/s$ (a PI
controller applied to a single integrator). 
The open-loop model is
%\begin{equation}
	$\openLoop(s) = \frac{s+1}{s^2}.$
%\end{equation}
Let us choose the controlling node $\contNode=1$ and the observing
node $\obsvNode=3$.
The transfer function is
\begin{equation}
	\tranFunFromTo{1}{3}(s) =  0.3 \frac{(s+1)^3 \prod_{i=1}^2(s^2 + \spatEigZ_i s
	+ \spatEigZ_i)}{\prod_{i=1}^5 (s^2 + \spatEig_i s + \spatEig_i)}.
\end{equation}
with $\spatEig=\{0, 0.39, 2, 2.72, 3.69\}$ and $\spatEigZ=\{0.5, 3\}$.
As indicated by (\ref{eq:tranFunProductForm}), the terms in the numerator and
the denominator products have the structure of $\vehDen(s)\contDen(s) + k
\vehNum(s)\contNum(s)$.
Moreover, since the distance between the nodes $1$ and $3$ is 2, there is also
$(s+1)^{2+1}$ in the numerator,
as follows from Corollary \ref{cor:openLoopZeros}. The weight of the path
from the node $1$ to $3$ is 0.3 (the product of the weights of the arcs).
The gains $\spatEig_i$ can be obtained as the eigenvalues of
the Laplacian matrix

{\small
\begin{equation}  
	\lapl = \left[\begin{matrix}
				1 & -1 & 0 & 0 & 0 \\
				-1 & 2 & -1 & 0 & 0 \\
				0 & -0.3 & 2.3 & -1 & -1 \\
				0 & 0 & 0 & 1 & -1 \\
				0 & 0 & -1.5 & -1 & 2.5  
			\end{matrix} \right]. 
\end{equation}}

The gains $\spatEigZ_i$ in the numerator can be obtained as
the negatives of the roots of the polynomial $\charPolNum(s) = s^2 + 3.5s +
1.5.$ Since there is only one path between $\contNode$ and $\obsvNode$, we can use
Theorem \ref{conj:onePath} to calculate the polynomial $\charPolNum(s)$. It
equals the characteristic polynomial of a matrix $\redLapl^1_{(1:3)}$, obtained
from $\lapl$ by deleting the rows and columns with indices $1,2,3$ of the
vertices on the path from 1 to 3. The polynomial is given as
\begin{equation} 
	\charPolNum(s) = \det \left(s \idMat_2 + \left[\begin{matrix}				
				1 & -1 \\
				-1 & 2.5   
			\end{matrix} \right]\right) = s^2 + 3.5s + 1.5. 
\end{equation}
As both $\spatEigZ_i$ and $\spatEig_i$ are real in this example, the poles and
zeros must lie on the root-locus curve for $\openLoop(s)=(s+1)/s^2$, as
shown in Fig.
\ref{fig:exampleGraphPoles}.
\begin{figure}  
\centering
	\includegraphics[width=0.19\textwidth]{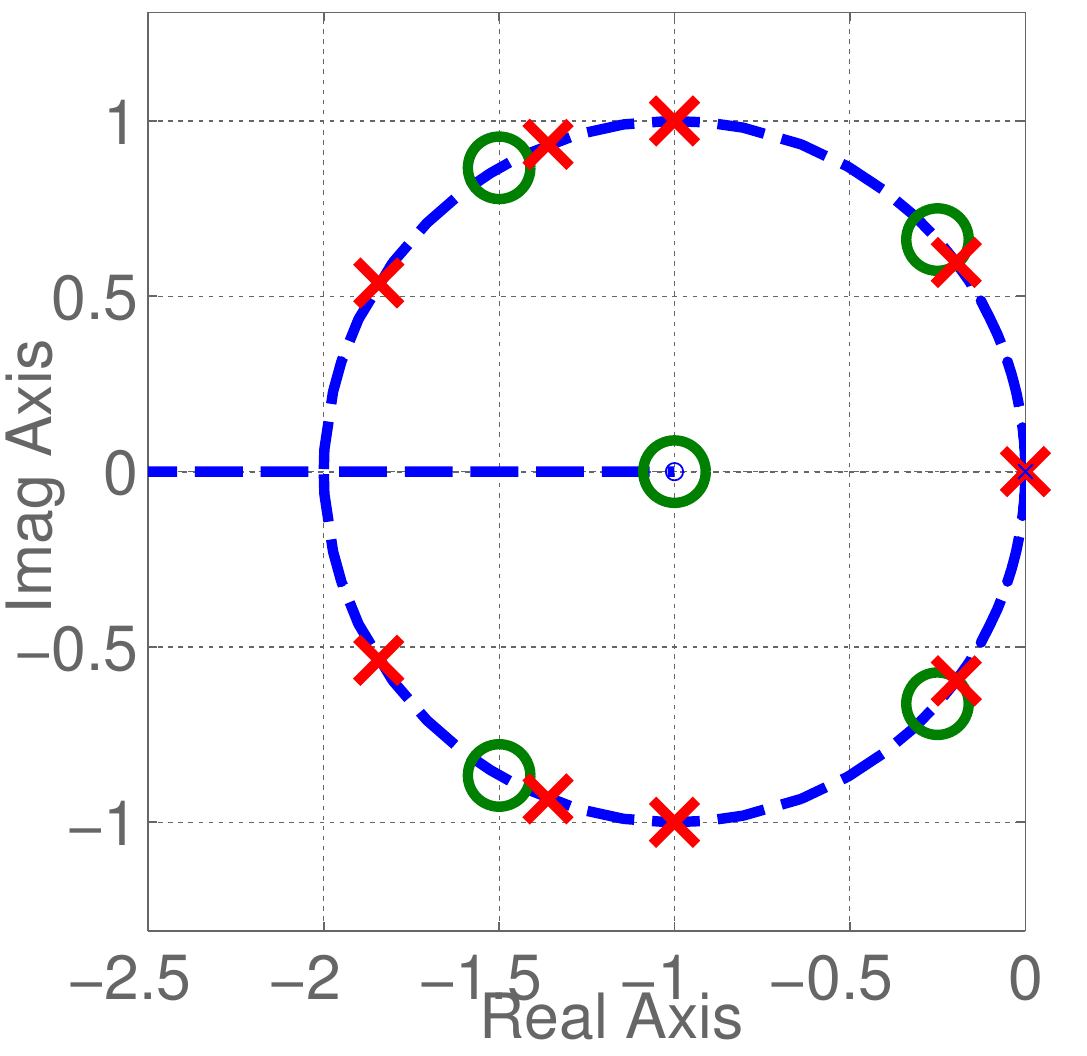}
	\caption{Poles (crosses) and zeros (circles) of $\tranFun_{1 3}(s)$ in the
	graph in Fig.
	\ref{fig:graphPath}. The root-locus curve
	for $\openLoop=(s+1)/s^2$ is dashed.
	}   
	\label{fig:exampleGraphPoles}        
\end{figure}
The minimal controllable
subspace is by Theorem \ref{thm:minContSubsp} equal to five
($\distance_{1 4}+1=4+1$), hence, the system is controllable from the node 1.
 
The transfer function $\tranFunGenInpState{1}{3}(s)$ from the input
disturbance $\inDist_1$ of agent 1 to the output $\pos_3$ is
\begin{equation}
	\tranFunDistIn{1}{3}(s) = \frac{\pos_3(s)}{\inDist_1(s)} = 0.3
	\frac{1}{s}
	\frac{(s+1)^2 \prod_{i=1}^2(s^2 + \spatEigZ_i s + \spatEigZ_i)}{\prod_{i=2}^5
	(s^2 + \spatEig_i s + \spatEig_i)}.
\end{equation}
The structure is the same as predicted in Theorem \ref{thm:genInputState}, since
$\openLoopPart(s)=\vehTf(s)=1/s$. The network part remains unchanged.
  
\section{Conclusion} 
	In this paper we considered transfer functions between two nodes in an
	arbitrary formation of identical SISO agents with an output coupling. 
Using the algebraic properties of forests in the graph, both numerator and
denominator of the transfer function were derived in a simple form of a product
of closed-loop polynomials with non-unit feedback gain. The transfer function for general input
and output consists of two parts:
the feedback part (fixed for a given pair of nodes) and the open-loop part.

The gains in the denominator and
numerator polynomials are the roots of polynomials in the single-integrator
system. If there is only one path between the controlling and observing nodes,
the numerator gains are given as eigenvalues of the principal submatrix of the
Laplacian. Finally, it is shown that the minimal dimension of the
controllable subspace grows with the maximal distance from the controlling node.

Although it is hard to tell any transient properties from the location of
poles and zeros --- there are simply too many of them --- still the product form
can serve as an analytical tool. For instance, it may help in the analysis of
the scaling in distributed control designs.
We have already applied some of the results in \cite{Herman2013b} to the
analysis of the scaling of the $\mathcal{H}_\infty$ norm, where the underlying
topology was a path graph.
	
\appendices
\section{Proof of Theorem \ref{lem:totTranFun}}
Before the proof, we need the following technical lemma.
\begin{lem}
		Let $(\lapl^k)_{\obsvNode \contNode}$ be the $\obsvNode, \contNode$ element
		of $\lapl^k$. Then
		\begin{equation}
			\sum_{i=1}^{\numVeh}\inputCoef_i v_{\obsvNode i}\spatEig_i^k =
			(\lapl^k)_{\obsvNode \contNode},
		\end{equation}
		\label{lem:eigSum}
	\end{lem}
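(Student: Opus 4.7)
The plan is to evaluate the right-hand side using the spectral decomposition of $\lapl$ that was introduced just before equation~(\ref{eq:blockDiagSyst}). Under the simple-eigenvalue assumption adopted throughout the derivations, $\matJ = \matEigVect^{-1}\lapl \matEigVect$ is the diagonal matrix $\mathrm{diag}(\spatEig_1,\dots,\spatEig_{\numVeh})$, so $\lapl = \matEigVect \matJ \matEigVect^{-1}$ and hence $\lapl^k = \matEigVect \matJ^k \matEigVect^{-1}$ since the similarity transform commutes with taking powers.

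Reading off the $(\obsvNode,\contNode)$ entry of this product and using that $\matJ^k$ is diagonal with entries $\spatEig_i^k$ gives
\begin{equation*}
(\lapl^k)_{\obsvNode \contNode} = \sum_{i=1}^{\numVeh} v_{\obsvNode i}\, \spatEig_i^k\, (\matEigVect^{-1})_{i\contNode}.
\end{equation*}
The claim now follows by recalling the definition $\inputCoef_i = \canonVect_i^T \matEigVect^{-1} \canonVect_{\contNode} = (\matEigVect^{-1})_{i\contNode}$ given just after equation~(\ref{eq:tranFun}), which identifies the summand with $\inputCoef_i v_{\obsvNode i} \spatEig_i^k$.

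There is no real obstacle here: the identity is essentially a bookkeeping statement about matrix entries of a diagonalizable matrix. The only subtlety is the reliance on simple eigenvalues; for Jordan blocks of larger size one would replace $\matJ^k$ by its upper-triangular power (whose superdiagonal entries are binomial coefficients times $\spatEig_i^{k-j}$) and verify that the resulting sum still matches $(\lapl^k)_{\obsvNode \contNode}$, but as stated earlier in the paper this extension is routine and is deliberately set aside.
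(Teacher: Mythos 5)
Your proposal is correct and is essentially the paper's own argument: both rest on the identity $\lapl^k = \matEigVect \matJ^k \matEigVect^{-1}$ and the identification $\inputCoef_i = (\matEigVect^{-1})_{i\contNode}$, the only cosmetic difference being that the paper starts from the left-hand sum and collapses $\sum_i \canonVect_i \spatEig_i^k \canonVect_i^T$ into $\matJ^k$, whereas you expand the right-hand side. Your closing remark about the Jordan-block case matches the paper's own (equally brief) disclaimer.
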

	\begin{proof}
		Since $\inputCoef_i=\canonVect_i^T\matEigVect^{-1}\canonVect_{\contNode}$ and 
		$\eigVect_{\obsvNode
		i}=\canonVect_\obsvNode^T\matEigVect\canonVect_{i}$, we get
		\begin{IEEEeqnarray}{rCl}		
			\sum_{i=1}^{\numVeh}&&\eigVect_{\obsvNode, i}  \spatEig_i^k \inputCoef_i =
			\sum_{i=1}^{\numVeh}\canonVect_\obsvNode^T\matEigVect\canonVect_{i} 
			\spatEig_i^k \canonVect_i^T\matEigVect^{-1}\canonVect_{\contNode}
			\nonumber\\
			&&= \canonVect_\obsvNode^T\matEigVect \left(
			\sum_{i=1}^{\numVeh}\canonVect_{i} \spatEig_i^k \canonVect_i^T\right) \matEigVect^{-1}\canonVect_{\contNode}			
			=\canonVect_\obsvNode^T \matEigVect\matJ^k \matEigVect^{-1}
			\canonVect_\contNode	
			\nonumber \\	
			&&= \canonVect_\obsvNode^T \lapl^k \canonVect_\contNode =
			(\lapl^k)_{\obsvNode \contNode}.
		\end{IEEEeqnarray}
		This holds also for Jordan blocks in $\matJ$ larger than one.  
	\end{proof} 
\begin{proof}[\textbf{Proof of Theorem \ref{lem:totTranFun}}]
Let us denote the
numerator of the open loop in (\ref{eq:openLoop}) as
$\olnum(s)=\vehNum(s)\contNum(s)$ and the denominator as
$\olden(s)=\vehDen(s)\contDen(s)$. Note that the development here shows the case
with simple Jordan blocks, although the proof remains valid for the case with
larger blocks. The transfer function $\tranFunCo(s)$ can be obtained from
(\ref{eq:outputTranFunObsv}) by using a common denominator as
\begin{IEEEeqnarray}{rCl}
		\tranFunCo(s) &=& 
		\frac{\numTf(s)}{\denTf(s)}=\sum_{i=1}^{\numVeh} \inputCoef_i
		\eigVect_{\obsvNode i} \frac{\vehNum(s) \contNum(s)}{\vehDen(s)
		\contDen(s) + \spatEig_i \vehNum(s) \contNum(s)} \label{eq:prodTF} \\ \nonumber
		&=& 
		\frac{\sum_{i=1}^{\numVeh} \Big(\inputCoef_i \eigVect_{\obsvNode i}
		\, \olnum(s) \,
		\prod_{j=1, j\neq i}^{\numVeh} \left[\olden(s) + \spatEig_j
		\olnum(s) \right]\Big) } {\prod_{i=1}^{\numVeh} \left[ \olden(s) +
		\spatEig_i \olnum(s) \right]} \\
		&=& \frac{\sum_{i=1}^{\numVeh} \inputCoef_i \eigVect_{\obsvNode i} \,
		\prodPartNum_i(s) } {\prod_{i=1}^{\numVeh} \left[ \olden(s) +
		\spatEig_i \olnum(s) \right]}, \label{eq:prodPartNum}
\end{IEEEeqnarray}
with $\prodPartNum_i(s)=\olnum(s) \,
		\prod_{j=1, j\neq i}^{\numVeh} \left[\olden(s) + \spatEig_j
		\olnum(s) \right]$. Note that the polynomials in single-integrator
dynamics are $\numPol(s), \denPol(s)$, while in higher-order dynamics they are
$\numTf(s), \denTf(s)$. The denominator of (\ref{eq:prodPartNum}) is the
denominator in Theorem
\ref{lem:totTranFun}.

Having the denominator, we have to find the numerator $\numTf(s)$. The
polynomial $\prodPartNum_i(s)$ in (\ref{eq:prodPartNum}) can be expanded in terms of powers of $\olnum$ and $\olden$ as (argument $(s)$ is omitted)
\begin{IEEEeqnarray}{rCl}
		&&\prodPartNum_i = \olnum \prod_{j=1, j \neq i}^{\numVeh} [\olden + \spatEig_j
		\olnum] = \olden^{N-1} \olnum + \olden^{N-2}\olnum^2 \left[\sum_{j=1, j \neq
		 i}^{\numVeh} \spatEig_j \right]  \nonumber \\
		&&+ \olden^{N-3}\olnum^{3}\left[{\sum_{j=1, k=1,
		k \neq i \neq j}^{\numVeh}} \spatEig_j \spatEig_k \label{eq:numPos}\right]
		 + \ldots + \label{eq:expandedProdNum} \\
		&& + \olden^{1}\olnum^{N-1}\left[\sum_{j=1, j \neq i}^{\numVeh} \,\,
		\left({\prod_{{k=1, k \neq i \neq j}}^{\numVeh}} \spatEig_k\right)
		\right] + \olnum^{N}\left[\;\;\prod_{\mathclap{j=1, j \neq i}}^{\numVeh}
		\spatEig_j \, \right]. 
		\nonumber
\end{IEEEeqnarray}
Let us denote the the
coefficients at the terms $\olden^j \olnum^{\numVeh-j}$ in $\prodPartNum_i(s)$
as $\prodPartNumCoef{j}{i}$. They are given as a sum of all products of
$\numVeh-j-1$ eigenvalues.
Then the polynomial $\prodPartNum_i(s)$ can be written as
\begin{equation}
	\prodPartNum_i = \olden^{\numVeh-1} \olnum + \prodPartNumCoef{\numVeh-2}{i}
	\olden^{\numVeh-2} \olnum^2 + \ldots + \prodPartNumCoef{1}{i}
	\olden \olnum^{\numVeh-1} + \prodPartNumCoef{0}{i}
	\olnum^{\numVeh}.
\end{equation}
The coefficients $\prodPartNumCoef{j}{i}$ can be simplified. Let us start with 
\begin{equation}
 \prodPartNumCoef{N-2}{i} = \sum_{j=1, j \neq
 i}^{\numVeh} \spatEig_j = \charPolLaplCoef_{\numVeh-1} - \spatEig_i, 
 \label{eq:firstProdNumCoef}
\end{equation}
since the coefficient $\charPolLaplCoef_{\numVeh-1}$ of $\denPol(s)$ is by
(\ref{eq:charLaplCoefDef})
$\charPolLaplCoef_{\numVeh-1}=\sum_{i=1}^{\numVeh}\spatEig_i.$
Similarly, the second coefficient is using (\ref{eq:charLaplCoefDef})
\begin{equation}
\prodPartNumCoef{\numVeh-3}{i} = \; \sum_{\mathclap{j=1, k=1,
		k \neq i \neq j}}^{\numVeh}\; \spatEig_j \spatEig_k =
		\charPolLaplCoef_{\numVeh-2}-\spatEig_i(\charPolLaplCoef_{\numVeh-1}-\spatEig_i).
		 \label{eq:secondProdNumCoef}
\end{equation}
For the last coefficient we get
\begin{equation}
  \prodPartNumCoef{0}{i} = \prod_{\mathclap{j=1, j \neq i}}^{\numVeh}
		\spatEig_j = \charPolLaplCoef_1 - \spatEig_i (\charPolLaplCoef_2 - \spatEig_i
		(\charPolLaplCoef_3 - \spatEig_i(\ldots))). \label{eq:lastProdNumCoef}
\end{equation}	
Knowing the coefficients $\prodPartNumCoef{i}{j}$, the numerator polynomial
$\numTf(s)$ can be using (\ref{eq:prodPartNum}) written as
\begin{IEEEeqnarray}{rCl}
	\numTf(s) &=& \sum_{i=1}^{\numVeh} \inputCoef_i \eigVect_{\obsvNode i}
		\prodPartNum_i(s) = \olden^{\numVeh-1} \olnum \left(\sum_{i=1}^{\numVeh}
		\inputCoef_i \eigVect_{\obsvNode i}\right)
		\nonumber \\
		 &+& 
		\olden^{\numVeh-2} \olnum^2
		\left(\sum_{i=1}^{\numVeh} \inputCoef_i
		 \eigVect_{\obsvNode i}
		\, \prodPartNumCoef{\numVeh-2}{i}\right) + \ldots 
		\label{eq:numPolExpanded}
		\\ \nonumber
		&+& \olden \olnum^{\numVeh-1}
		\left( \sum_{i=1}^{\numVeh} \inputCoef_i
		 \eigVect_{\obsvNode i}
		\, \prodPartNumCoef{1}{i}\right) 
		+\olnum^{\numVeh}
		\left( \sum_{i=1}^{\numVeh} \inputCoef_i
		 \eigVect_{\obsvNode i}
		\, \prodPartNumCoef{0}{i}\right).
\end{IEEEeqnarray}
The coefficients $\charPolNumCoef_i$ of individual powers of $\olden^i
\olnum^{\numVeh-i}$ in $\numTf(s)$ can be simplified using Lemma
\ref{lem:eigSum} and the formulas for $\prodPartNumCoef{i}{j}$
(\ref{eq:firstProdNumCoef}-\ref{eq:lastProdNumCoef}). The first two read
\begin{IEEEeqnarray}{rCl}
	\charPolNumCoef_{\numVeh-1} &=& \sum_{i=1}^{\numVeh} \inputCoef_i
	\eigVect_{\obsvNode i} = (L^0)_{\obsvNode \contNode}  \label{eq:h0}\\
	\charPolNumCoef_{\numVeh-2} &=& \sum_{i=1}^{\numVeh} \inputCoef_i
		 \eigVect_{\obsvNode i}
		\, \prodPartNumCoef{\numVeh-2}{i} = \charPolLaplCoef_{\numVeh-1}
		\left(\sum_{i=1}^{\numVeh}
		\inputCoef_i
		 \eigVect_{\obsvNode i}\right)
		- \sum_{i=1}^{\numVeh}
		\inputCoef_i
		 \eigVect_{\obsvNode i} \spatEig_i 
		 \nonumber \\ &=& 
	\charPolLaplCoef_{\numVeh-1}(\lapl^0)_{\obsvNode \contNode} -
	(\lapl^1)_{\obsvNode \contNode} 
\end{IEEEeqnarray}

Using the same ideas, the other coefficients $\charPolNumCoef_i$ are
\begin{IEEEeqnarray}{rCl}
	\charPolNumCoef_{\numVeh-3} &=& \charPolLaplCoef_{N-2}(\lapl^0)_{\obsvNode
	\contNode} - \charPolLaplCoef_{N-1}(\lapl^1)_{\obsvNode \contNode} +
		(\lapl^2)_{\obsvNode \contNode} \\
	&\vdots& \nonumber \\
	\charPolNumCoef_{0} &=& \charPolLaplCoef_1 (\lapl^0)_{\obsvNode \contNode} -
	\charPolLaplCoef_2 (\lapl^1)_{\obsvNode \contNode} + \ldots +
	(\lapl^{N-1})_{\obsvNode \contNode} \label{eq:hN}.
\end{IEEEeqnarray} 
The general form is now apparent,
\begin{equation}
	\charPolNumCoef_i = \sum_{j=0}^{N-i-1} \charPolLaplCoef_{i+j+1}
	(-\lapl)_{\obsvNode\contNode}^j.
	\label{eq:higen}
\end{equation}

Using the coefficients $\charPolNumCoef_i$ in (\ref{eq:h0})-(\ref{eq:hN}), the
numerator $\numTf(s)$ in (\ref{eq:numPolExpanded}) equals
\begin{IEEEeqnarray}{rCl}
		\numTf(s) &=& \olnum(s)\bigg(\charPolNumCoef_{\numVeh-1} \olden(s)^{\numVeh-1}
		+ \charPolNumCoef_{\numVeh-2} \olden^{\numVeh-2}(s)\olnum(s) \nonumber \\
		&&+ \charPolNumCoef_{\numVeh-3} \olden^{\numVeh-3}(s)\olnum^2(s) + \ldots +
		\charPolNumCoef_{0}\olnum^{N-1}(s)\bigg). \label{eq:numeratorComplete}
\end{IEEEeqnarray} 

Now we show that the coefficients $\charPolNumCoef_i$ in (\ref{eq:numeratorComplete}) are equal to
	the coefficients $\charPolNumCoefSingle_i$ of the numerator polynomial
	$\numPol(s)$ in the single integrator dynamics, i.e., $\charPolNumCoef_i=\charPolNumCoefSingle_i, \forall \, i$.

	To see this, Corollary 4 in \cite{Chebotarev2002} gives us a relation
	\begin{equation}
		\text{adj}(s \idMat + \lapl) = \sum_{k=0}^{N-1}\left( \sum_{j=0}^{\numVeh-k-1}
		\charPolLaplCoef_i s^{\numVeh-j-1}  \right) (-\lapl^{k}/s^k).
		\label{eq:adjMatGi}
	\end{equation} 
	The coefficient matrix $\Gamma_{i}$ at $s^i$ in (\ref{eq:adjMatGi}) is then
	defined as
	\begin{equation}
	\Gamma_i = \sum_{j=0}^{\numVeh - i - 1} \charPolLaplCoef_{i+j+1} (-\lapl)^j. 
	\end{equation}
	Taking as an element of interest the $\obsvNode, \contNode$th element in
	$\text{adj}(s \idMat + \lapl)$, we see by (\ref{eq:higen}) that the
	coefficients $(\Gamma_i)_{\obsvNode \contNode }=\charPolNumCoef_i$. Moreover,
	since by (\ref{eq:numPolAsAdjugateMat}) $\text{adj}(s \idMat +
	\lapl)_{\obsvNode \contNode}$ is equal to the numerator polynomial in single-integrator dynamics, we get
	$\charPolNumCoefSingle_i=\charPolNumCoef_i, \; \forall i$.

All the coefficients $\charPolNumCoefSingle_i$ are functions of the powers of
the Laplacian.
Using Lemma \ref{lem:numwalks}, it is clear that $\charPolNumCoefSingle_i = 0$
for $i > \numVeh-\distanceCO$, since all
$(\lapl^{j})_{\obsvNode \contNode}$ for $j=0, 1, \ldots,
\distanceCO-1$ are zeros.
Then in (\ref{eq:numPolGamma}), $\orderQ=\numVeh-\distanceCO-1$ also for directed weighted graphs.
This result allows us to rewrite (\ref{eq:numeratorComplete}) as
\begin{IEEEeqnarray}{rCl}
		\numTf&&(s) = \olnum^{1+\distanceCO}(s)\Big(
		\charPolNumCoefSingle_{\numVeh-\distanceCO-1}
		\olden^{\numVeh-1-\distanceCO}(s) \label{eq:factoredNum} \\ \nonumber
		&&+
		\charPolNumCoefSingle_{\numVeh-\distanceCO-2}\olden^{\numVeh-2-\distanceCO}(s)\olnum(s)
		+ \ldots + \charPolNumCoefSingle_{0}\olnum^{N-1-\distanceCO}(s) \Big).		
	\end{IEEEeqnarray}

Previous equation can be factored into a product
\begin{equation}
	\numTf(s) =
	\charPolNumCoefSingle_{\numVeh-\distanceCO-1} \olnum^{1+\distanceCO}(s) \:
	\prod_{i=1}^{\mathclap{\numVeh-1-\distanceCO}} \;\; \Big(\olden(s) +
	\spatEigZ_i \olnum(s)\Big), \label{eq:numProd}
\end{equation}
where the scalars $-\spatEigZ_i$ are the roots of the polynomial $\numPol(s)$
defined in (\ref{eq:numPolGamma}). They are thus the zeros of the transfer function for the single
integrator dynamics.

Note that
$\charPolNumCoef_{\numVeh-\distanceCO-1}=\charPolNumCoefSingle_{\numVeh-\distanceCO-1}=\pathWeightCo$
by Lemma \ref{lem:coefForestWeights}. Then we get the numerator as
\begin{equation}
	\numTf(s) = \pathWeightCo \: \olnum^{1+\distanceCO}(s) \,
	\prod_{i=1}^{\mathclap{\numVeh-1-\distanceCO}} \;\; \Big(\vehDen(s)\contDen(s)
	+ \spatEigZ_i \vehNum(s)\contNum(s)\Big). \label{eq:finalNumerator}
\end{equation}
Now in (\ref{eq:finalNumerator}) and (\ref{eq:prodPartNum}) we have both the
numerator $\numTf(s)$ and the denominator $\denTf(s)$ of (\ref{eq:tranFunProductForm}), which concludes the proof.
\end{proof}
 
\bibliographystyle{ieeetran} 
% Include this if you use bibtex
\bibliography{Papers-TransferFunctionsInGraphs}

% Generated by IEEEtran.bst, version: 1.14 (2015/08/26)
\begin{thebibliography}{10}
\providecommand{\url}[1]{#1}
\csname url@samestyle\endcsname
\providecommand{\newblock}{\relax}
\providecommand{\bibinfo}[2]{#2}
\providecommand{\BIBentrySTDinterwordspacing}{\spaceskip=0pt\relax}
\providecommand{\BIBentryALTinterwordstretchfactor}{4}
\providecommand{\BIBentryALTinterwordspacing}{\spaceskip=\fontdimen2\font plus
\BIBentryALTinterwordstretchfactor\fontdimen3\font minus
  \fontdimen4\font\relax}
\providecommand{\BIBforeignlanguage}[2]{{%
\expandafter\ifx\csname l@#1\endcsname\relax
\typeout{** WARNING: IEEEtran.bst: No hyphenation pattern has been}%
\typeout{** loaded for the language `#1'. Using the pattern for}%
\typeout{** the default language instead.}%
\else
\language=\csname l@#1\endcsname
\fi
#2}}
\providecommand{\BIBdecl}{\relax}
\BIBdecl

\bibitem{Lin2012a}
F.~Lin, M.~Fardad, and M.~R. Jovanovic, ``{Performance of leader-follower
  networks in directed trees and lattices},'' in \emph{Proc. IEEE 51st IEEE
  Conf. Decision and Control}, 2012, pp. 734--739.

\bibitem{Fitch2013}
K.~Fitch and N.~E. Leonard, ``{Information centrality and optimal leader
  selection in noisy networks},'' in \emph{Proc. 52nd IEEE Conf. Decision and
  Control}, 2013, pp. 7510--7515.

\bibitem{Bamieh2012}
B.~Bamieh, M.~R. Jovanovi\'{c}, P.~Mitra, and S.~Patterson, ``{Coherence in
  Large-Scale Networks: Dimension-Dependent Limitations of Local Feedback},''
  \emph{IEEE Trans. Autom. Control,}, vol.~57, no.~9, pp. 2235--2249, 2012.

\bibitem{Herman2013b}
I.~Herman, D.~Martinec, Z.~Hurak, and M.~Sebek, ``{Nonzero Bound on Fiedler
  Eigenvalue Causes Exponential Growth of H-Infinity Norm of Vehicular
  Platoon},'' \emph{IEEE Trans. Autom. Control}, vol.~60, no.~8, pp.
  2248--2253, 2015.

\bibitem{Olfati-Saber2007}
R.~Olfati-Saber, J.~A. Fax, and R.~M. Murray, ``{Consensus and Cooperation in
  Networked Multi-Agent Systems},'' \emph{Proceedings of the IEEE}, vol.~95,
  no.~1, pp. 215--233, Jan. 2007.

\bibitem{Zelazo2011}
D.~Zelazo and M.~Mesbahi, ``{Edge Agreement: Graph-Theoretic Performance Bounds
  and Passivity Analysis},'' \emph{IEEE Trans. Autom. Control}, vol.~56, no.~3,
  pp. 544--555, 2011.

\bibitem{Fax2004a}
J.~A. Fax and R.~Murray, ``{Information flow and cooperative control of vehicle
  formations},'' \emph{IEEE Trans. Autom. Control}, vol.~49, no.~9, pp.
  1465--1476, 2004.

\bibitem{Fradkov2011}
A.~Fradkov and I.~Junussov, ``{Synchronization of linear object networks by
  output feedback},'' in \emph{Proc. IEEE 50th Conf. Decision and
  Control}.\hskip 1em plus 0.5em minus 0.4em\relax IEEE, 2011, pp. 8188--8192.

\bibitem{Zhang2011}
H.~Zhang, F.~Lewis, and A.~Das, ``{Optimal design for synchronization of
  cooperative systems: state feedback, observer and output feedback},''
  \emph{IEEE Trans. Autom. Control}, vol.~56, no.~8, pp. 1948--1952, 2011.

\bibitem{Arcak2007}
M.~Arcak, ``{Passivity as a design tool for group coordination},'' \emph{IEEE
  Trans. Autom. Control}, vol.~52, no.~8, pp. 1380--1390, 2007.

\bibitem{Chopra2006}
N.~Chopra and M.~Spong, ``{Passivity-Based Control of Multi-Agent Systems},''
  in \emph{Advances in robot control}, S.~Kawamura and M.~Svinin, Eds.\hskip
  1em plus 0.5em minus 0.4em\relax Springer Berlin Heidelberg, 2006, pp.
  107--134.

\bibitem{Seiler2004a}
P.~Seiler, A.~Pant, and K.~Hedrick, ``{Disturbance propagation in vehicle
  strings},'' \emph{IEEE Trans. Autom. Control}, vol.~49, no.~10, pp.
  1835--1841, 2004.

\bibitem{Middleton2010}
R.~H. Middleton and J.~H. Braslavsky, ``{String Instability in Classes of
  Linear Time Invariant Formation Control With Limited Communication Range},''
  \emph{IEEE Trans. Autom. Control}, vol.~55, no.~7, pp. 1519--1530, 2010.

\bibitem{Egerstedt2012}
M.~Egerstedt, S.~Martini, M.~Cao, K.~Camlibel, and A.~Bicchi, ``{Interacting
  with networks: How does structure relate to controllability in single-leader,
  consensus networks?}'' \emph{IEEE Control Systems}, vol.~32, no.~4, pp.
  66--73, 2012.

\bibitem{Zhang2014}
S.~Zhang, M.~Cao, and M.~K. Camlibel, ``{Upper and lower bounds for
  controllable subspaces of networks of diffusively coupled agents},''
  \emph{IEEE Trans. Autom. Control}, vol.~59, no.~3, pp. 745--750, 2014.

\bibitem{Wu1995}
C.~W. Wu and L.~O. Chua, ``{Application of Kronecker products to the analysis
  of systems with uniform linear coupling},'' \emph{IEEE Trans. Circuits and
  Systems I: Fundamental Theory and Applications}, vol.~42, no.~10, pp.
  775--778, 1995.

\bibitem{Briegel2011}
B.~Briegel, D.~Zelazo, M.~Burger, and F.~Allgower, ``{On the zeros of consensus
  networks},'' in \emph{Proc. IEEE Conf. Decision and Control}, 2011, pp.
  1890--1895.

\bibitem{Torres2013}
J.~{Abad Torres} and S.~Roy, ``{Connecting Network Graph Structure to
  Linear-System Zero Structure},'' in \emph{Proc. American Control Conf.},
  2013, pp. 6114--6119.

\bibitem{Yoon2011}
M.-G. Yoon and K.~Tsumura, ``{Transfer function representation of cyclic
  consensus systems},'' \emph{Automatica}, vol.~47, no.~9, pp. 1974--1982,
  2011.

\bibitem{Torres2014}
J.~A. Torres and S.~Roy, ``{Stabilization and Destabilization of Network
  Processes by Sparse Remote Feedback : Graph-Theoretic Approach},'' in
  \emph{Proc. American Control Conf.}, 2014, pp. 3984--3989.

\bibitem{Herman2014a}
I.~Herman, D.~Martinec, and M.~Sebek, ``{Zeros of Transfer Functions in Network
  Control with Higher-Order Dynamics},'' in \emph{Proc. 19th IFAC World
  Congress}, 2014, pp. 9177--9182.

\bibitem{Chebotarev2014}
P.~Chebotarev and R.~Agaev, ``{The forest consensus theorem},'' \emph{IEEE
  Trans. Autom. Control}, vol.~59, no.~9, pp. 2475--2479, 2014.

\bibitem{Chebotarev2002}
------, ``{Forest matrices around the Laplacian matrix},'' \emph{Linear algebra
  and its applications}, vol. 356, no. 1-3, pp. 253--274, 2002.

\bibitem{Pirani2014}
M.~Pirani and S.~Sundaram, ``{Spectral properties of the grounded Laplacian
  matrix with applications to consensus in the presence of stubborn agents},''
  in \emph{Proc. American Control Conf.}, 2014, pp. 2160--2165.

\bibitem{Horn1999}
R.~A. Horn and C.~R. Johnson, \emph{{Topics in Matrix Analysis}}.\hskip 1em
  plus 0.5em minus 0.4em\relax Cambridge University Press, 1999.

\bibitem{Horn1996}
------, \emph{{Matrix analysis}}.\hskip 1em plus 0.5em minus 0.4em\relax
  Cambridge: Cambridge University Press, 1990.

\bibitem{Chapman2013}
A.~Chapman and M.~Mesbahi, ``{On strong structural controllability of networked
  systems: A constrained matching approach},'' in \emph{Proc. American Control
  Conf.}, 2013, pp. 6126--6131.

\bibitem{Clark2014}
A.~Clark, B.~Alomair, L.~Bushnell, and R.~Poovendran, ``{Input selection for
  performance and controllability of structured linear descriptor systems},''
  \emph{arXiv preprint, arXiv:1412.3868v1}, pp. 1--28, 2014.

\bibitem{Shi2014}
G.~Shi, K.~C. Sou, H.~Sandberg, and K.~H. Johansson, ``{A graph-theoretic
  approach on optimizing informed-node selection in multi-agent tracking
  control},'' \emph{Physica D: Nonlinear Phenomena}, vol. 267, pp. 104--111,
  2014.

\end{thebibliography}
\end{document}